\newcommand{\lsc}{\mathcal{LSC}}
\begin{document}
\title{Separation of Unconscious Robots with Obstructed Visibility
}
\titlerunning{Separation of Unconscious Robots with Obstructed Visibility}
\author{}
\institute{}
%
%
\author{Prajyot Pyati \and
Navjot Kaur \and
Saswata Jana \orcidlink{0000-0003-3238-8233} \thanks{Supported by  Prime Minister's Research Fellowship (PMRF) scheme of the Govt. of India (PMRF-ID: 1902165)} \and
Adri Bhattacharya \orcidlink{0000-0003-1517-8779} \thanks{Supported by CSIR, Govt. of India, Grant Number: 09/731(0178)/2020-EMR-I} \and
Partha Sarathi Mandal\orcidlink{0000-0002-8632-5767}}
\authorrunning{Pyati et al.}
\maketitle              
\begin{abstract}

We study a recently introduced \textit{unconscious} mobile robot model, where each robot is associated with a \textit{color}, which is visible to other robots but not to itself.
The robots are autonomous, anonymous, oblivious and silent, operating in the Euclidean plane under the conventional \textit{Look-Compute-Move} cycle.
A primary task in this model is the \textit{separation problem}, where unconscious robots sharing the same color must separate from others, forming recognizable geometric shapes such as circles, points, or lines.
All prior works model the robots as \textit{transparent}, enabling each to know the positions and colors of all other robots.
In contrast, we model the robots as \textit{opaque}, where a robot can obstruct the visibility of two other robots, if it lies on the line segment between them.
Under this obstructed visibility, we consider a variant of the separation problem in which robots, starting from any arbitrary initial configuration, are required to separate into concentric semicircles.
We present a collision-free algorithm that solves the separation problem under a semi-synchronous scheduler in $O(n)$ epochs, where $n$ is the number of robots.
The robots agree on one coordinate axis but have no knowledge of $n$.
\keywords{Opaque  \and Mobile robots \and  Separation  \and Unconscious robots}
\end{abstract}

\section{Introduction}

In the domain of distributed computing with a swarm of mobile robots, the robots are assumed to be \emph{oblivious} \cite{flocchini2012distributed}, and predominantly thought of as point objects \cite{cohen2008local}. These robots operate in a classical Look-Compute-Move (LCM) cycle.
Once activated, a robot performs a \emph{Look} by taking a snapshot of its surroundings. Based on this snapshot, it computes a destination during the \emph{Compute} phase, and finally, in the \emph{move} phase, it remains in place or moves to the computed destination.
The robots are \emph{autonomous} (no central control),  \emph{anonymous} (have no identifiers to distinguish among themselves), and \emph{oblivious} (do not remember any past actions or positions).
They work collaboratively by executing the same algorithm to achieve some global task, for example, pattern formation \cite{bose2021arbitrary} \cite{flocchini2008}, mutual visibility \cite{di2014mutual} \cite{pramanick2024mutual}, gathering \cite{agmon2006fault}, dispersion \cite{augustine2018dispersion}, etc.
In the conventional $\mathcal{OBLOT}$ model \cite{flocchini2008}, besides being oblivious, the robots are also silent, meaning they cannot communicate explicitly among themselves.

In general, robots communicate via externally visible lights (colors) \cite{das2016autonomous,flocchini2012distributed}, which they flash from a predefined color palette.
A robot displays its light to transmit a message and observes others’ lights to receive messages.
In this context, three major models have been studied.
Firstly, the $\mathcal{LUMI}$ model, where a robot can see the color of all the robots, including itself. Secondly, the $\mathcal{FSTA}$ model, where a robot can only see the color of its own, making it equivalent to  $\mathcal{OBLOT}$ model with some persistent memory (silent but not oblivious).
Thirdly, the recently introduced \textit{unconscious} colored robot model \cite{flocchini2025asynchronous}, \cite{seike2023}, where the robots are unaware of their own color, but can observe the colors of other robots.

One fundamental challenge in this unconscious colored robot model is the problem of separation, where the robots of same color must separate from others and form a group in terms of some recognizable geometric shapes such as points, lines or circles.
Seike and Yamauchi \cite{seike2023} were the first to study this problem.
They show an impossibility result: separation into points cannot be achieved due to the symmetry of the initial configuration. They also proposed an algorithm that solves the separation problem into circles using oblivious robots having common chirality (robots agree on the clockwise direction) under $\mathcal{SSYNC}$.
In $\mathcal{SSYNC}$, a non-empty set of robots gets activated simultaneously and performs their LCM cycle in sync.
Later on, Flocchini et al. \cite{flocchini2025asynchronous} solved the problem of separation into lines using the robots of similar properties as in \cite{seike2023} but under $\mathcal{ASYNC}$ scheduler and robots agree on one axis.
In $\mathcal{ASYNC}$, robots execute their LCM cycles at arbitrary times, each within a finite duration.
In both $\mathcal{SSYNC}$ and $\mathcal{ASYNC}$, the time is measured in terms of \textit{epochs}, defined as the smallest interval in which every robot completes at least one full LCM cycle.

Surprisingly, both of the previously mentioned works \cite{flocchini2025asynchronous}, \cite{seike2023} are considered the robot as \textit{transparent}, meaning that a robot never obstructs the visibility of other robots, and can always see the positions and colors of all other robots.
In this paper, we consider robots to be \textit{non-transparent} (or \textit{opaque}) \cite{bose2021arbitrary}, \cite{di2014mutual}, meaning that if a robot lies on the line segment between two others, it obstructs their mutual visibility.
Due to obstructed visibility, the robots are prone to collisions, which may affect their hardware. Moreover, they lack knowledge of the total number of robots, making signalling and coordination significantly more challenging in solving the separation problem. In this paper, we address these challenges and show that robots in $O(n)$ epochs, achieve a separated configuration without collision in finite time. To the best of our knowledge, this is the first work to introduce opaqueness in the separation problem for unconscious robots.
Our desired separated configuration is the concentric semicircles. Agreement on one axis allows us to fix the diameter line of these semicircles, which is perpendicular to the agreed axis. Hence, it is sufficient to assume that there are at least two robots of each color. The semicircular configuration naturally extends to a circular one if each color has at least three robots. The smaller lower bound on the number of robots thus motivates the choice of a semicircular formulation for the separated configuration. Moreover, the semicircular configuration breaks the perfect symmetry of the circular one while maintaining balance along the diameter. This is particularly important when robots must preserve an open space and move coherently toward that direction (for instance, exploring or guarding the open half-space opposite to the semicircular arc).

\noindent {\bf Our Contributions:} 
In this paper, we extend the study of unconscious robots by addressing the separation problem into concentric semicircles under $\mathcal{SSYNC}$ scheduler. We propose an algorithm, \textsc{Con-SemCirc-Separation}, that achieves this separation in $O(n)$ epochs without collision (Theorem \ref{thm:final}), where $n$ is the total number of robots. The robots are anonymous, silent, and oblivious, with no knowledge of $n$, but they agree on one coordinate axis.
This paper is the first to introduce opaqueness into the study of unconscious robots, adding significant complexity to the separation task, which our algorithm successfully resolves.
Table \ref{tab:comparioson} compares our work with most related works in the literature.

\begin{table}[H]
\centering
\begin{tabular}{|c|c|c|c|}
\hline
Papers & SSS 2023\cite{seike2023} & IJNC 2025\cite{flocchini2025asynchronous} & This Paper  \\ \hline
Opacity & $\times$ & $\times$ & $\checkmark$ \\ \hline
Separation Type & Concentric Circles & Lines & Concentric Semicircles \\ \hline
Scheduler & $\mathcal{SSYNC}$ & $\mathcal{ASYNC}$ & $\mathcal{SSYNC}$ \\ \hline
Chirality & $\checkmark$ & $\times$ & $\times$ \\ \hline
One-axis Agreement & $\times$ & $\checkmark$ & $\checkmark$\\ \hline
\end{tabular}
\caption{Comparing our model and result with the literature}
\label{tab:comparioson}
\vspace{-10mm}
\end{table}

\section{Model, Preliminaries and Problem Definition}

\textbf{Robots:} We consider a set of \( n \) \textit{anonymous} (no unique identifier), \textit{autonomous} (no external control), \textit{homogeneous} (run the same algorithm), and \textit{silent} (no explicit mode of communication) point robots \( R = \{r_1, r_2, \ldots, r_n\} \).
The robots operate in the Euclidean plane \( \mathbb{R}^2 \) and are initially located at distinct positions.
Let \( p_j(t) \) represent the position of the robot \( r_j \) at time \( t \).
Each robot \( r_j \) is associated with a value \( c_j \), called its \emph{color}, chosen from a totally ordered set \( H \), which is known to all the robots.
Let \( C \subseteq H \) be the subset of colors currently held by the robots, and let \( |C| = k \) denote the number of distinct colors in \( C \). Each color in \( C \) is assigned to at least two robots.
The configuration of the robots at time \( t \) is the set of tuples comprising the robot's positions and their colors which is denoted by the set:
$P(t) = \{(p_1(t), c_1), (p_2(t), c_2), \dots, (p_n(t), c_n)\}$.
When the time $t$ is clear from context, we abuse the notation $r_j$ to refer to the current position of robot $r_j$.
Besides the previous properties, the robots are \textit{opaque}, meaning that if a robot $r_j$ lies on the line segment connecting two other robots, it blocks the mutual view of the other two robots.
They are \textit{oblivious}, which means they have no memory to remember anything from the past executions of the algorithm.
They also have no knowledge about the total number of robots $n$ in the system.
They have the agreement on one-axis. Without loss of generality, we assume it is on the $y$-axis, that is, all robots agree on the positive direction of the $y$-axis but may disagree on the orientation of the $x$-axis.
Finally, the robots are \textit{unconscious} of their own color, they can see the colors of other robots but don't know which color they themselves hold.

\noindent \textbf{Activation Cycle:}
Each robot operates in a classical \emph{Look-Compute-Move} (LCM) cycle.
In the \emph{Look} phase, the robot observes its surroundings and takes a snapshot of the current configuration. 
We denote the snapshot observed by robot \( r_j \) at time \( t \) as \( Z_j(t) \), which is the set of tuples consisting of all visible robot positions and their colors.
In the \emph{Compute} phase, the robot runs the same deterministic algorithm \( \psi \), which takes the snapshot \( Z_j(t) \) as input and decides its destination or remain stationary.
Finally, in the \emph{Move} phase, the robot moves to the computed destination. The movement is \textit{rigid}, where a robot never stops before it reaches its destination.

\noindent \textbf{Activation Scheduler and Run Time:}
The robots are activated under a \textit{semi-synchronous} (in short $\mathcal{SSYNC}$) scheduler, where a non-empty set of robots is activated simultaneously at any given time step. 
The set of robots, that are activated together at a given time, execute their LCM cycle in sync, while the others remain inactive until the completion of the cycle.
Time is measured in terms of \textit{epochs}, where an epoch is the smallest time interval in which every robot is activated and
completes one full LCM cycle at least once.

\begin{definition}({\texttt{Separated  Configuration}})\label{defn:Separated-Config}
     Let \(\mathcal{S}= \{s_1, s_2, \ldots, s_{k} \}\) be a set of $k$ concentric semicircles whose diameter is perpendicular to the $y$-axis. The radius of the semicircle \(s_i\) is \(rad_i=i\cdot rad\), where $1 \leq i \leq k$ and $rad$ is the radius of the innermost semicircle. A separated configuration for a set of robots $R = \{r_1, r_2, \cdots, r_n\}$ with the colors from the set $C= \{c_1, c_2, \cdots, c_k\}$ is the configuration in which the robots satisfy the following predicate:
\begin{multline}
    \texttt{SepSC} = \Bigg\{ \exists t : (\forall t' > t,\, P(t) = P(t')) \text{ and } (\forall c_i \in C,\, 
\exists s_i \in \mathcal{S} \text{ such that } \\
     \left( \forall (p_j(t), c_j) \in P(t),\, c_i = c_j \iff p_j(t) \in s_i \right)) \text{ and } \\
    (\forall c_i, c_{i'} \in C,\, c_i \neq c_i' \Rightarrow s_i \neq s_{i'}) \Bigg\}.
\end{multline}
\end{definition}

 Here $c_i$ is a color from the color set $C$ and $c_j$ represents the color of the robot $r_j \in R$. In other words, the configuration is considered as separated if all robots are located on concentric semicircles whose diameters are perpendicular to  \(y\)-axis, with radii $rad, 2\cdot rad, 3\cdot rad, \dots, k\cdot rad$, where $rad$ is the radius of the smallest semicircle and $k$ is the total number of colors. All robots of same color occupy same semicircle, and each semicircle contains robots of only one color.
\begin{definition} ({\textbf{Problem Definition}})
    Given $n$ unconscious, opaque, and oblivious silent point robots, with no knowledge of $n$ but agreeing on one axis, deployed in an arbitrary initial configuration on the Euclidean plane, the problem aims to design an algorithm for the robots to reach a \texttt{separated configuration}.
    
\end{definition}

\subsection{Notations and Terminologies:}
\label{sec:notation&Terminology}
\begin{itemize}
    \item $L_r$ is the line passing through $r$ and perpendicular to $y$-axis (an horizontal line through $r$). $L_r^{\perp}$ is the line passing through $r$ and parallel to $y$-axis (an vertical line through $r$).

    \item The line segment joining the two robots $r$ and $r'$ is denoted by $\overline{rr'}$, and the line passing through those two robots is denoted by $\overleftrightarrow{rr'}$.
    
    \item  $P(t)$ denotes the configuration of robots at time $t$.
    
    \item The \textit{smallest enclosing rectangle} of the configuration $P(t)$, denoted by $\delta(P(t))$, is the rectangle with the minimum area that has two sides parallel to the $y$-axis and contains all robots at time $t$, either inside or on its boundary.

    \item The sides of the rectangle \( \delta(P(t)) \) that are parallel to the \( y \)-axis are referred to as the \textit{vertical sides}, and those perpendicular to the \( y \)-axis are referred to as the \textit{horizontal sides}. Moreover, we can distinguish between the \textit{top} and \textit{bottom} sides of the rectangle, as the robots share an agreement on the \( y \)-axis. However, the left and right sides cannot be distinguished.

    \item The robots located on the vertical sides of $\delta(P(t))$ are termed as \emph{terminal robots}. Note that each vertical side of the rectangle $\delta(P(t))$ contains at least one terminal robot. Since the robots have the agreement on the $y$-axis, a robot $r$ can identify itself as the terminal robot by checking the following: if one of the open half-planes delimited by $L_r^{\perp}$ contains no robots, then $r$ is the terminal robot; otherwise, it is non-terminal.
    
    \item The \textit{diameter robots} are the vertically lowest terminal robots on each vertical side of $\delta(P(t))$. At any time $t$, there are exactly two diameter robots.

    \item \textbf{Triangular Configuration:} A configuration $P(t)$ is referred to as a \emph{triangular configuration} if it satisfies the following conditions:

        \noindent{1.} The bottom side of the enclosing rectangle $\delta(P(t))$ contains exactly two robots located at its endpoints.

        \noindent{2.} All robots lie either inside or on the boundary of an isosceles right-angled triangle whose hypotenuse aligns with the bottom side of $\delta(P(t))$.

\end{itemize}

\section{Algorithm to Separate Opaque Colored Robots}
In this section, we present an algorithm that solves the separation problem on concentric semicircles with opaque and unconscious robots. We begin with a high-level overview of the algorithm, followed by its detailed description.

\subsubsection{A high-level Idea of the Algorithm:}
\label{sec-high-level-algo}

In this section, we give a high-level idea of our proposed algorithm named \textsc{Con-SemCirc-Separation}. The algorithm is divided into multiple stages. Each stage is treated as a subroutine.
In the first stage of the algorithm (\textsc{Arbitrary-To-Triangular}, Section \ref{subsec:arbitrayToTriangular}), the robots will move from an arbitrary configuration to a triangular configuration defined earlier. The idea is to enclose all the robots in a right-angled isosceles triangle whose hypotenuse is defined as the line segment joining the two diameter robots.
In the next stage (\textsc{Triangular-To-Semicircular}, Section \ref{subsec:TriangularToSemicircle}), the robots will project themselves on a semicircle whose diameter is formed by joining the two diameter robots in an ordered manner. Once all the robots project themselves on a single semicircle, then the robots will be mutually visible to each other and can calculate the total number of robots $n$. 


Thereafter, each robot divides the semicircle into several grid points and relocates itself on some of those grid points (\textsc{Semicircular-To-GridPoint}, Section \ref{subsec:SemiciricleToGridPoint}).
In the subsequent stage (\textsc{GridPoint-To-SectorPartitioning}, Section \ref{subsec:gridpointToSector}), all robots recognize leader(s) on the semicircle, who initiate signalling for the rest of the robots.
Although silent, the leader robot signals the rest of the robots by moving to specific points on the semicircle, each encoded as an ordered pair representing a position of a particular robot and its color.
Non-leader robots decode the leader's position to determine their respective colors and destination points, ensuring that robots of the same color occupy the same sector (an arc that subtends a fixed angle at the center) of the semicircle.
Once the robots are separated by the colors, in the final stage (\textsc{SectorPartitioning-To-ConcenSemiCirc}, Section \ref{subsec:sectorToSeparartion}), they move to their designated semicircle based on the predefined order. Hence, completing the separation.


\section*{Description of Algorithm: \textsc{Con-SemCirc-Separation}}
\label{sec:description-algo}
Here, we describe the algorithm in detail. The algorithm is divided into multiple stages.
Since the robots are oblivious, they cannot remember their previous positions and actions.
By seeing its surroundings (local view), each robot selects the appropriate stage (a subroutine) and executes the corresponding action.
Due to obstructed visibility, robots lack knowledge of the global configuration, and hence their selected stage may not align with the global one.
As a result, some robots may execute different stages simultaneously, creating overlaps.
However, our analysis shows that, despite such overlaps, all robots within finite epochs, agree on a common stage before moving to the next.
The algorithm is presented from the viewpoint of a robot $r$, whose actions are determined by its current characterization.
We state corresponding lemma at the end of each stage. The proof of each are deferred at the end of the description of all stages (Section \ref{app:analysis}).
We begin with the following stage.
Fig. \ref{fig:Stage-Representation} represents the use of the stages, to inevitably change the robots configuration.
 \begin{figure}
     \centering
     \includegraphics[width=0.8\linewidth]{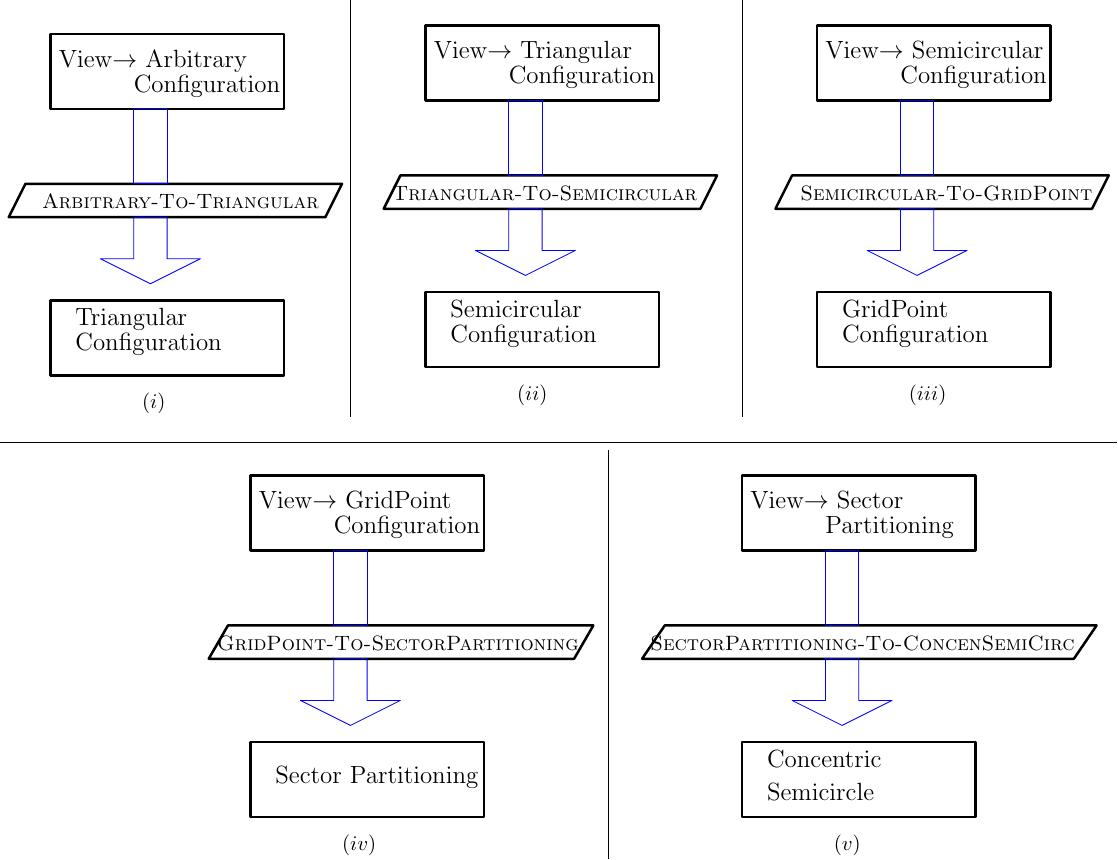}
     \caption{Represents the eventual configuration to be attained, while executing the respective subroutines, based on some specific view.}
     \label{fig:Stage-Representation}
 \end{figure}

\subsection{Stage: \textsc{Arbitrary-To-Triangular}}
\label{subsec:arbitrayToTriangular}
Our first step is to transform the initial arbitrary configuration into a triangular configuration.
To do this, we move the diameter robots or the robots that are on the bottom line of the enclosing triangle. Thus, we divide the stage into the following cases depending on the position of $r$.

 \noindent \textbf{Case A ($r$ is a diameter robot):}
 It first computes the minimum angle between the line \(L_r^{\perp}\) and the lines connecting it to any visible robot located above it, as shown in Fig. \ref{fig:minimum_angle}.
 Let $\theta$ be the minimum angle, and $r_a$ be the corresponding robot that makes this angle with $r$. 
 If $\theta < \pi/4$, the robot $r$ considers the open half-plane $\mathcal{H}_r$ delimited by the line $L_r^{\perp}$ that contains no robots. It then finds a target point $t_r$ on $\mathcal{H}_r \cap L_r$ such that the angle between the lines $\overleftrightarrow{r_at_r}$ and $L_r$ is $\pi/4$. Finally, $r$ moves to the point $t_r$.
 A diameter robot $r$ may need to repeat these steps multiple times, as there could be a robot $r''$ behind $r'$ on the line $\overleftrightarrow{rr'}$ and $r$ can not see $r''$ at the current LCM cycle due to its obstructed visibility. In such a case, $r$ needs to move further until the minimum angle $\theta$ becomes exactly $\pi/4$. 

\begin{figure}[htbp]
    \centering
    
    \begin{minipage}{0.46\textwidth}
        \centering
        \includegraphics[width=1.03\linewidth]{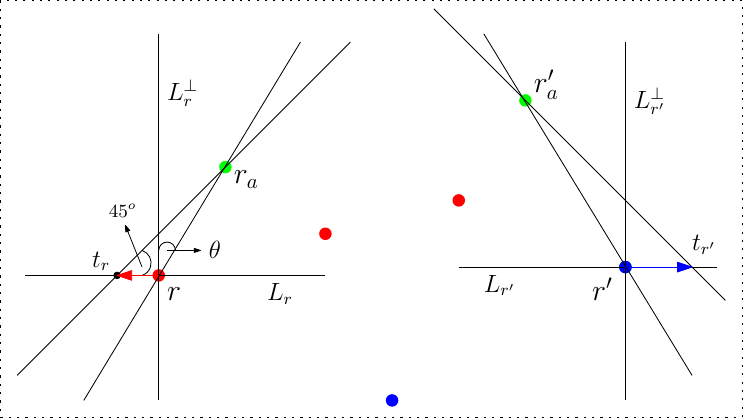}
        \caption{The diameter robots $r$ and $r'$ find their respective $\theta < \pi/4$, and move to their target point.}
        \label{fig:minimum_angle}
    \end{minipage}
    \hfill
    \begin{minipage}{0.52\textwidth}
        \centering
        \includegraphics[width=1.0\linewidth]{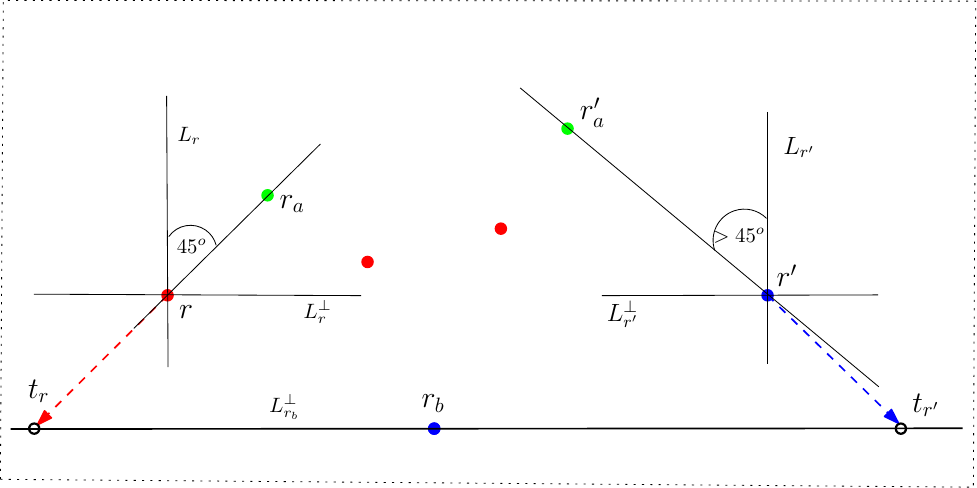}
        \caption{The diameter robots $r$ and $r'$ find their respective $\theta \geq \pi/4$, and move to the target point on $L_{r_b}$.}
     \label{fig:diameter_robot_bottom_line}
    \end{minipage}
\end{figure}

If \(\theta \geq \pi/4 \), $r$ checks if there is any robot below it. If there is no such robot, it remains in place. Otherwise, it finds $r_b$, the lowest visible robot below it. Thereafter, it finds the target point $t_r$ on $\mathcal{H}_r \cap L_{r_b}$ such that the angle between the lines $\overleftrightarrow{rt_r}$ and $L_{r_b}$ is $\pi/4$, and then finally moves to it, as shown in Fig. \ref{fig:diameter_robot_bottom_line}.


\noindent \textbf{Case B ($r$ is not a diameter robot, but on the bottom side of $\delta(P(t))$:}\\
If $r$ finds at most one robot on the bottom side of the enclosing rectangle $\delta(P(t))$ other than itself, it remains status quo.
On the other hand, if it sees two other robots on the bottom side of $\delta(P(t))$, it computes $v$ and $h$, where $v$ (resp. $h$) is the minimum vertical (resp. horizontal) positive distance from $r$ to any other robot above it.
Then it calculates a target point $t_r$, which is vertically above $r$, and the distance between $r$ and $t_r$ is $v$.
If the target point is unoccupied, and it moves to \(t_r\). Whereas if the point \(t_r\) is already occupied, \(r\) calculates the new target point \(t_r'\) that is a horizontally \(h/3\) distance and vertically \(v\) distance away from \(r\).
The measure of $h/3$ is chosen to ensure collision-freeness of the algorithm, since another robot from $L_r$, located at a horizontal distance $h$ from $r$, may also target to move synchronously a distance $v$ upward.
Finally, the robot $r$ moves to $t_r'$, as depicted in Fig. \ref{fig:Bottom_line_non-diameter_moves}.

After this stage, we ensure that all the robots reach the triangular configuration from the initial one. The corresponding lemma supporting this claim is stated below. The proof is deferred to the analysis section (Section \ref{app:analysis}).

\begin{restatable}{lem}{initialtotriangular}
    \label{lemma:L1}
   Starting from an arbitrary initial configuration, all the robots reach a triangular configuration in $O(n)$ epochs using stage \textsc{Arbitrary-To-Triangular}.
\end{restatable}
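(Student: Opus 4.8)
The plan is to decompose the analysis into two essentially independent monotone processes -- the outward-and-downward motion of the two diameter robots governed by Case A, and the upward clearing of the remaining bottom robots governed by Case B -- and to show that each terminates within $O(n)$ epochs, after which no robot's local rule triggers a move and the standing configuration meets both conditions defining a triangular configuration. The first thing I would establish is the invariant that the identities of the two diameter robots persist throughout this stage: a diameter robot moves only into the empty open half-plane $\mathcal{H}_r$ delimited by $L_r^{\perp}$ or strictly downward onto a lower horizontal line, so it always remains the extreme and lowest robot on its vertical side; conversely a Case B robot moves strictly upward with horizontal offset at most $h/3<h$, so it can never migrate onto a vertical side of $\delta(P(t))$. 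This separation lets me treat the left corner, the right corner, and the bottom-clearing independently.

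Next I would bound the motion of one diameter robot, say the left one, by a monotone potential. Normalizing the corner's height (which is fixed during these outward moves) to $0$, the rule that brings the minimum-angle robot $r_a$ to exactly $\pi/4$ relocates the corner's horizontal coordinate $c$ to $x_{r_a}-y_{r_a}$, and I would check that a further move is triggered only when some -- possibly newly revealed -- robot has $x_i-y_i$ strictly smaller than the current $c$. Hence $c$ is strictly decreasing and takes values in the finite set $\{x_i-y_i\}$, giving at most $O(n)$ outward moves before every robot lies inside the $45^{\circ}$ cone; the right corner is symmetric, with its coordinate $c_R$ strictly increasing toward $\max_i(x_i+y_i)$, and the ``$\theta\ge\pi/4$'' downward rule is handled the same way, each step strictly lowering the corner's height toward the global minimum $y$-coordinate, which bounds the number of such steps by $n$. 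Crucially, lowering $c_L$ only relaxes the left-leg constraint $y\le x-c_L$ and leaves the right-leg constraint $x+y\le c_R$ untouched (and symmetrically), so the two corners never undo each other's progress.

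For Case B I would argue that a non-diameter bottom robot that moves up by its nearest vertical gap $v$ rises strictly off the current bottom line, while the diameter robots descend toward the global minimum $y$-coordinate; once a robot rises above the diameter robots' final level it is never selected by Case B again, so at most $n-2$ robots are cleared in $O(n)$ epochs, leaving exactly the two diameter robots at the endpoints of the bottom side. Collision-freeness follows because diameter robots move into provably empty half-planes, and the offset $h/3$ prevents two robots on $L_r$ at horizontal distance $h$ that move up synchronously by $v$ from landing on a common target. Summing the three $O(n)$ bounds yields a triangular configuration in $O(n)$ epochs.

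I expect the principal obstacle to be the convergence of Case A under obstructed visibility: one must show that progressively unmasking occluded robots, together with the shifting identity of the binding robot $r_a$, still produces a strictly monotone potential with at most $O(n)$ distinct attained values rather than an oscillation, and that no revealed robot can force $c$ back upward. A secondary difficulty is excluding harmful interference between the concurrently active processes under the adversarial $\mathcal{SSYNC}$ scheduler -- in particular between a corner's downward moves and the upward moves of the robots beneath it -- which the independence of the left and right leg constraints, together with the strict monotonicity of each coordinate, are designed to resolve.
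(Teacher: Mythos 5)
There is a genuine gap in your proposal: you analyze only the moves prescribed by \textsc{Arbitrary-To-Triangular} itself (Case A corner motion and Case B bottom-clearing), but in this algorithm each robot selects its stage from its local, obstructed view, so during this stage some robots wrongly execute the \emph{next} stage, \textsc{Triangular-To-Semicircular}. Concretely, a robot that sees only two robots below itself sharing a $y$-coordinate may misinterpret them as the diameter robots and project itself onto the lower semicircle they define --- that is, it moves \emph{downward}, possibly strictly below the current bottom-most line of the configuration. This simultaneously breaks your two key invariants: non-diameter robots do not only move upward, and the bottom side of $\delta(P(t))$ (hence the ``diameter robots' final level'' toward which your Case B argument converges) can drop after it has seemingly stabilized, forcing the corners to redo their descent. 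The paper's proof is organized around exactly this difficulty: its Claim~1 must account for downward-moving robots when showing each corner's $\pi/4$-line $\tilde{L}_{r_1}$ eventually dominates all robots, and its Claim~2 proves that the bottom-most line can change \emph{at most once} (via a case analysis on whether the current bottom line holds one or two robots, using that after a single crossing every other robot sees at least three robots below itself and therefore never misinterprets again). Without an analogue of Claim~2, your argument cannot exclude a cascade of repeated downward crossings, each costing the corners further moves, and the $O(n)$ bound does not follow.

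A secondary weakness is that your potential argument treats the set $\{x_i - y_i\}$ as static, but Case B moves (and the downward misinterpretation moves above) change these values during the stage, so ``strictly decreasing over a finite value set'' is not immediate; the paper instead uses a per-epoch progress argument --- each corner move places at least one additional robot into the closed region $\widetilde{\mathcal{H}}^{down}_{r_1}$, and robots already there never leave it upward --- which sidesteps the moving-target issue. You correctly flagged interference under $\mathcal{SSYNC}$ as the main risk, but located it in the wrong place: the dangerous interference is not between Case A and Case B of this stage, it is between this stage and the prematurely executed semicircle-projection stage.
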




    \begin{figure}[h]
    \centering

    \begin{subfigure}[b]{0.3\linewidth}
        \centering
        \includegraphics[width=\linewidth]{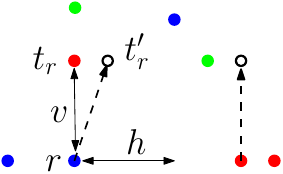}
        \caption{$r$ being non-terminal on the bottom line of $\delta(P(t))$, moves to the line of nearest vertical robot.}
        \label{fig:Bottom_line_non-diameter_moves}
    \end{subfigure}
    \hfill
    \begin{subfigure}[b]{0.3\linewidth}
        \includegraphics[width=\linewidth]{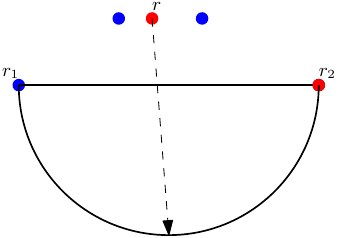}
        \caption{Case C.1.1: The semicircle is empty, and $r$ being unique nearest robot to the center, moves there.}
        \label{fig:Case.1.1}
    \end{subfigure}
    \hfill
    \begin{subfigure}[b]{0.3\linewidth}
        \includegraphics[width=\linewidth]{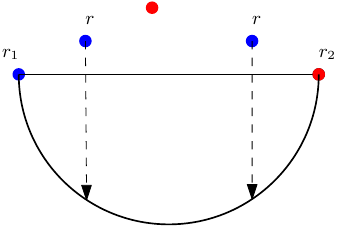}
        \caption{Case C.1.2: Two nearest robots to the center of the empty semicircle move vertically below.}
        \label{fig:Case.1.2}
    \end{subfigure}
    
    \begin{subfigure}[b]{0.3\linewidth}
        \includegraphics[width=\linewidth]{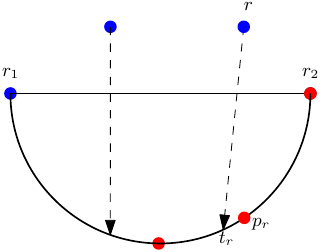}
        \caption{Case C.2: Finding a robot at $low(\lsc$ $(r_1, r_2))$, $r$ moves to a free point on the semicircle.}
        \label{fig:Case.2}
    \end{subfigure}
    \hfill
    \begin{subfigure}[b]{0.3\linewidth}
        \includegraphics[width=\linewidth]{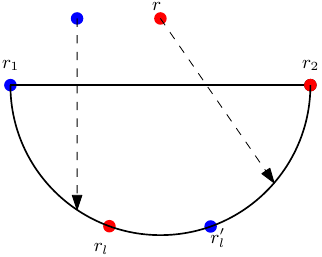}
        \caption{Case C.3: $r$ moves to the midpoint of the arc of the semicircle joining $r_l'$ and $r_2 (=r'')$.}
        \label{fig:Case.3}
    \end{subfigure}
    \hfill
    \begin{subfigure}[b]{0.3\linewidth}
        \includegraphics[width=\linewidth]{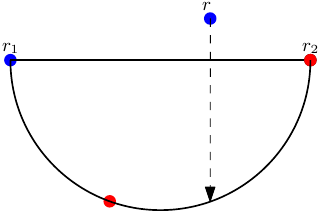}
        \caption{Case C.4: $r$ moves vertically below after finding a robot on the semicircle, equidistant from center.}
        \label{fig:Case.4}
    \end{subfigure}
    \caption{Illustration of Case B, first stage and all the sub-cases of the second stage.}
    \label{fig:all_cases}
\end{figure}
\subsection{Stage: \textsc{Triangular-To-Semicircular}}
\label{subsec:TriangularToSemicircle}
After reaching the triangular configuration, the next objective is to transform it into a semi-circular configuration. To achieve this, the robots are sequentially moved onto the lower semicircle, whose diameter is defined by the line joining the diameter robots, while keeping the diameter robots fixed in place.
In this stage, a robot $r$ first identifies two distinct robots \(r_1\) and \(r_2\), which satisfy a specific set of conditions outlined below.
\begin{enumerate}
     \item 
     Both the robots $r_1$ and $r_2$ lie on the same horizontal line, i.e., $L_{r_1} = L_{r_2}$, and the line segment $\overline{r_1r_2}$ contain no other robot besides $r_1$ and $r_2$.
    \item 
    There is no robot lying in the open region between the lines \( \overleftrightarrow{r_1r_2} \) and \( L_r \), with \( L_r \) lying above \( \overleftrightarrow{r_1r_2} \).
    \item 
    Any robot \(r_3\) positioned below the line $\overleftrightarrow{r_1r_2}$ must form a right angled triangle with \(r_1\) and \(r_2\), where the right angle is at $r_3$. 
    \item All visible robots above $\overleftrightarrow{r_1r_2}$ must lie within the interior or on the boundary of an isosceles right-angled triangle, where the hypotenuse is $\overline{r_1r_2}$.
\end{enumerate}


 Upon identifying such robots, $r$ proceeds according to the following cases.
We denote the lower semicircle having the diameter $\overline{r_1r_2}$ by $\lsc(r_1,r_2)$.
Let $cen(\lsc(r_1, r_2))$ be the centre of the semicircle $\lsc(r_1, r_2)$ and $low(\lsc(r_1, r_2))$ be the lowest point on this semicircle, which is essentially the intersection of the semicircle and the vertical line through $cen(\lsc(r_1, r_2))$.
\begin{itemize}
\item \textbf{Case C.1 (There is no robot on the semicircle $\lsc(r_1, r_2)$):}
In this case, $r$ considers all visible robots on the line $L_r$ and checks whether it is the nearest robot to $cen(\lsc(r_1, r_2))$. 
If $r$ is the unique robot nearest to $cen(\lsc(r_1, r_2))$, it moves to $low(\lsc(r_1, r_2))$, the lowest point on the semicircle $\lsc(r_1, r_2)$ (see Fig. \ref{fig:Case.1.1}). 
On the other hand, if there exists another robot $r'$ such that both $r$ and $r'$ are equidistant from the center of the semicircle, $r$ moves on the semicircle vertically below, as shown in Fig. \ref{fig:Case.1.2}.


\item \textbf{Case C.2 (There is a robot on $low(\lsc(r_1, r_2))$):}
In this case, the robot $r$ computes a point $p_r$ on the semicircle that is vertically below itself.
If $p_r$ is unoccupied, $r$ moves to $p_r$.
Otherwise, $r$ computes the minimum horizontal distance $h$ among all visible robots that do not lie on $L_r^{\perp}$.
It then calculates a point $t_r$ on the semicircle $\lsc(r_1, r_2)$ such that the horizontal distance between $p_r$ and $t_r$ is $h/3$.
Since there are two such $t_r$ on the semicircle, $r$ selects one at random and moves there. Both the cases of occupied and unoccupied $p_r$ can be seen in the Fig. \ref{fig:Case.2}.

\item \textbf{Case C.3 (There are two robots $r_{l}$ and $r'_{l}$ lying on $\lsc(r_1, r_2)$ equidistant from $low(\lsc(r_1, r_2))$,  and no robot is present on the arc of $\lsc(r_1, r_2)$ joining $r_{l}$ and $r'_{l}$):}
In this scenario, the robot $r$ first checks if it lies in the open region between the lines $L^{\perp}_{r_{_l}}$ and $L^{\perp}_{r'_{_l}}$. 
If $r$ is not in that open region, it computes the point $p_r$, which is the point vertically below itself on the semicircle.
If there is no robot at $p_r$, it moves to that point.
Otherwise, it computes a target point $t_r$ on the semicircle, as described in the previous Case C.2, and moves to it.
In contrast, if $r$ lies in the open region between the lines $L_{r_l}$ and $L_{r'_l}$, it checks the number of visible robots on $L_r$. Based on that, it executes the following actions:

\begin{enumerate}
    \item If $r$ finds two robots on $L_r$ other than itself, it remains stationary. 
    \item If $r$ detects exactly one robot, say $r'$, on $L_r$, it considers the open half-plane $\mathcal{H}_r'$ delimited by the line $L_r^{\perp}$ that does not contain $r'$.
    Note that the half-plane $\mathcal{H}_r'$ contains exactly one of $r_l$ and $r'_l$. W.l.o.g., assume it contains $r'_l$. Then $r$ moves to the midpoint of the arc of $\lsc(r_1,r_2)$ connecting $r'_l$ and $r''$, where $r''$ is the horizontally nearest robot to $r'_l$ on $\lsc(r_1,r_2)$ within $\mathcal{H}_r'$ (see Fig. \ref{fig:Case.3}).
    \item If $r$ sees no robot on $L_r$ except itself, it randomly chooses an unoccupied point on $\lsc(r_1,r_2)$ lying above the line $\overleftrightarrow{r_lr_l'}$ and moves there.
\end{enumerate}

\item\textbf{Case C.4 (There is only one robot \(r_l\) on $\lsc(r_1,r_2)$ but not on the $low(\lsc(r_1, r_2))$) ):} In this case, the robot checks whether the horizontal distance from $low(\lsc(r_1, r_2))$ to $r_l$ is equal to the horizontal distance from $low(\lsc(r_1, r_2))$ to itself. If they are equal, it projects itself vertically onto the semicircle lying below. Otherwise remains stationary.
\end{itemize}

If the robot $r$ does not satisfy any of the above conditions, it remains in place, as it will do so within finite epochs.
During this stage, some robots may incorrectly identify different robots as the diameter robots and, instead of moving to the designated semicircle, may enter the interior.
In such a scenario, the terminal robots $r_1$ and $r_2$ again follow \textsc{Arbitrary-To-Triangular} and move horizontally downward to restore the triangular configuration. However, we prove in our analysis that this situation can occur at most once.
Thereafter, all robots inevitably reach a semicircular configuration with diameter as $\overline{r_1r_2}$, the line segment joining the two diameter robots.

\begin{restatable}{lem}{atmostonemisinterpret}
    After all the robots reach to a triangular configuration by executing \textsc{Arbitrary-To-Triangular}, only one robot may misinterpret two other robots \( r'_1 \) and \( r'_2 \) as diameter robots and project itself below the actual diameter line $\overleftrightarrow{r_1r_2}$.
    \label{lemma:L2}
\end{restatable}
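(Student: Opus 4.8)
The plan is to analyze precisely when a misinterpretation can occur and then argue that the very act of misinterpreting destroys the configuration feature that made it possible, so it cannot recur. First I would characterize what it means for a robot to misinterpret: a robot $r$ picks a pair $(r_1', r_2')$ satisfying the four identification conditions of the stage, but this pair is \emph{not} the true diameter pair $(r_1, r_2)$ of $\delta(P(t))$. Because the true configuration is triangular (by Lemma~\ref{lemma:L1}), the genuine diameter robots sit at the two endpoints of the bottom side, and every other robot lies inside or on the bounding isosceles right triangle. I would observe that condition~3 forces any robot $r_3$ strictly below $\overleftrightarrow{r_1'r_2'}$ to see $\overline{r_1'r_2'}$ at a right angle, i.e. to lie on the lower semicircle $\lsc(r_1',r_2')$; combined with the triangular hypotenuse being the true bottom side, this severely restricts where a spurious pair can sit. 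The key geometric claim to establish is that a spurious pair $(r_1',r_2')$ admitting a robot below it (hence enabling a downward projection) can exist only because of a transient obstructed-visibility artifact: $r$ cannot see the true diameter robots or some robot on the true bottom line, and so it mistakes an interior horizontal pair for the diameter.

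Next I would make the counting/monotonicity argument. The crucial structural invariant I expect to exploit is that the true bottom side of $\delta(P(t))$ contains exactly two robots (the diameter robots) in a triangular configuration, and that the region strictly below $\overleftrightarrow{r_1r_2}$ is empty. A misinterpreting robot projects itself to a point on $\lsc(r_1',r_2')$ that lies \emph{below} the true diameter line $\overleftrightarrow{r_1r_2}$; this is exactly the event the lemma names. I would then show that once one robot has done this, the global configuration is no longer triangular in the relevant sense, and the designated response in the stage description kicks in: the true terminal robots $r_1$ and $r_2$ re-execute \textsc{Arbitrary-To-Triangular}, moving horizontally downward along $L_{r_b}$ (where $r_b$ is now the rogue robot sitting below the old bottom line) to re-enclose everyone in a fresh triangle whose bottom side passes through that lowest robot. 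After this repair, the lowest point of the configuration is occupied and the new bottom line is uniquely determined, so the obstruction that permitted the original misreading is gone.

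The heart of the argument, and the step I expect to be the main obstacle, is proving the ``at most once'' uniqueness rather than merely ``finitely often.'' For this I would argue by contradiction: suppose a second misinterpretation occurs after the repair. I would show that the repair establishes a stronger invariant, namely that after the terminal robots move down to $L_{r_b}$, there is a unique vertically-lowest robot and the bottom side of the new $\delta(P(t))$ contains exactly the two (new) diameter robots with no robot strictly below $\overleftrightarrow{r_1 r_2}$. Any candidate spurious pair $(r_1'',r_2'')$ chosen by a robot $r'$ in the next round would, by identification condition~3, again require a robot on $\lsc(r_1'',r_2'')$ strictly below it; but the repaired invariant forces every robot to lie on or above the genuine diameter line, and condition~4 confines the robots above $\overleftrightarrow{r_1''r_2''}$ to the isosceles right triangle on $\overline{r_1''r_2''}$. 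I would then show these two constraints are jointly satisfiable only when $(r_1'',r_2'')$ coincides with the true diameter pair, so no robot below the true diameter line can be created a second time. The delicate part is handling obstructed visibility in this second round: I must rule out that a robot's \emph{partial} view lets it again misplace the diameter. The argument here is that after the first misinterpretation and repair, the lowest-robot position is occupied and visible as the unique minimum, which anchors every robot's interpretation of the bottom line regardless of which robots it can see, thereby eliminating the ambiguity that produced the single allowed misinterpretation.
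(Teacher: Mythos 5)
Your proposal proves the wrong ``once.'' The lemma, as the paper both proves and later uses it (see Case 2 of the proof of Lemma~\ref{lemma:semiCirc}, which invokes it as: at any given LCM cycle, only one robot can misinterpret), is a claim about \emph{simultaneity}: in the triangular configuration, at most one robot can even possess a local view satisfying the four projection conditions with a spurious pair $(r'_1, r'_2)$. This matters because the scheduler is $\mathcal{SSYNC}$: if two robots could simultaneously hold such deceptive views, both would drop below $\overleftrightarrow{r_1r_2}$ in the same round, and no amount of subsequent ``repair'' reasoning recovers the structure the later stages rely on. Your argument never addresses this. You characterize when a single robot $r$ can misinterpret, and then argue by contradiction that a \emph{second} misinterpretation cannot happen \emph{after} the terminal robots' corrective move --- i.e., you bound recurrence over time, not the number of robots that can be fooled at once. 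The paper's proof is instead a static occlusion-counting argument: if $r$ sees only two robots below itself, then by opacity every robot below $\overleftrightarrow{r'_1r'_2}$ --- in particular both true diameter robots $r_1$ and $r_2$ --- must be hidden on the lines $\overleftrightarrow{rr'_1}$ or $\overleftrightarrow{rr'_2}$; consequently any robot above $L_r$ sees at least three robots below itself, any robot on $L_r$ sees $r'_1$, $r'_2$ and at least one of $r_1, r_2$, and any robot below $L_r$ lies on one of the two occlusion lines and therefore sees the true diameter robot on the \emph{other} line. In every case no second robot can certify a spurious pair. This fan-out-of-occlusion-lines counting is the missing idea in your proposal.

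Even taken on its own terms, the repair-based half of your argument has flaws. The post-repair invariant you posit is wrong: after the terminal robots move down to the rogue robot's horizontal line $L_{r_b}$, that bottom line contains \emph{three} robots, not ``exactly the two (new) diameter robots,'' and the configuration only becomes triangular again after a further corrective step in which the middle robot moves up. More seriously, your ``anchoring'' claim --- that once the lowest position is occupied it is ``visible as the unique minimum \ldots regardless of which robots it can see'' --- is precisely the kind of statement opacity forbids you to assert for free: the lowest robot can itself be occluded from another robot by a third robot on the connecting segment, so visibility of the minimum must be argued, not assumed. Since the entire difficulty of this paper relative to prior work is obstructed visibility, any valid proof of this lemma must at its core be an argument about which robots can be hidden from which; the paper's proof is such an argument, and yours is not.
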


\begin{restatable}{lem}{traingulartosemicircle}
    \label{lemma:semiCirc}
    From the triangular configuration, all the robots reach to a single semicircle with diameter as $\overline{r_1r_2}$ by following \textsc{Triangular-To-Semicircular}.
\end{restatable}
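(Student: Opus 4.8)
The plan is to prove Lemma~\ref{lemma:semiCirc} by induction on the number of robots that still lie strictly above the diameter line $\overleftrightarrow{r_1r_2}$, showing that in every epoch at least one such robot correctly projects itself onto the semicircle $\lsc(r_1,r_2)$, so that the interior empties in $O(n)$ epochs. First I would establish a well-defined global ``landing order'' argument: among all robots above the diameter line, consider those lying on the topmost non-empty horizontal line $L_r$ that is closest to $\overleftrightarrow{r_1r_2}$ from above. By the triangular-configuration guarantee (condition~2 and condition~4 in the stage description), these robots have an unobstructed downward view of the semicircle, so they can correctly identify $r_1$ and $r_2$ as the diameter robots. I would then verify that such a robot always falls into exactly one of Cases~C.1--C.4, and that the action prescribed moves it (or a robot below it) onto $\lsc(r_1,r_2)$ without creating a new obstruction for the robots it leaves behind.

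The second block of the argument is to handle the four cases uniformly and show \emph{progress}. The clean invariant I would maintain is: \emph{after each epoch, the set of occupied points on $\lsc(r_1,r_2)$ grows, and no robot above the line ever gets ``stuck'' permanently.} Case~C.1 (empty semicircle) seeds the first landing at $low(\lsc(r_1,r_2))$; the tie-breaking by vertical projection when two robots are equidistant from the center prevents collision and still guarantees one net arrival. Cases~C.2, C.3, and C.4 then fill the semicircle symmetrically outward from $low$, always projecting a robot vertically below onto a free point, with the $h/3$ horizontal perturbation used purely to break synchronous collisions (two robots aiming for the same semicircle point). I would argue that in each epoch the robot on the closest-to-diameter occupied horizontal line is guaranteed to complete a projection, so the number of interior robots strictly decreases; combined with the fact that at most one robot ever misprojects below the line (Lemma~\ref{lemma:L2}), and that this triggers a single corrective application of \textsc{Arbitrary-To-Triangular}, the total number of epochs stays $O(n)$.

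The main obstacle I expect is the \textbf{visibility/obstruction analysis coupled with the case selection being purely local}. Because the robots are opaque and oblivious, a robot decides its case from its own possibly-obstructed snapshot, and the delicate point is to prove that the robots which \emph{should} move this epoch (those with a clear line to the semicircle) do fall into a case that moves them, while robots with an obstructed view correctly remain stationary (via the final ``if none of the conditions hold, stay put'' clause) rather than misfiring into the interior. I would therefore need a lemma guaranteeing that whenever the semicircle state is ``partially filled symmetrically,'' every robot with an unobstructed downward projection point sees a configuration matching exactly one of C.1--C.4, and that these moves preserve the triangular/semicircular structure for the next epoch. The secondary technical hurdle is collision-freeness under $\mathcal{SSYNC}$: I must show that no two robots projecting in the same epoch can land on the same point, which is exactly what the $h/3$ offset and the equidistant tie-breaking rules are designed to rule out; verifying these offsets never push a robot off the semicircle or onto an already-occupied point requires a short but careful geometric check.

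Finally, I would close the argument by noting termination: once the interior above $\overleftrightarrow{r_1r_2}$ is empty, all robots lie on $\lsc(r_1,r_2)$ (plus the two fixed diameter robots), every robot is mutually visible, and no case condition is satisfied any longer, so the configuration is stationary. The epoch bound follows because each robot lands at most once (oblivious but monotone progress), at most one correction cycle occurs by Lemma~\ref{lemma:L2}, and the correction itself costs $O(n)$ epochs by Lemma~\ref{lemma:L1}, giving the overall $O(n)$ bound consistent with the stage's stated complexity.
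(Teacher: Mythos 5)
Your main progress argument is essentially the paper's: from the triangular configuration there always exists at least one robot just above $\overleftrightarrow{r_1r_2}$ that sees both diameter robots and projects onto $\lsc(r_1,r_2)$; once the semicircle is non-empty and all remaining robots are inside the triangle, every robot above the diameter line sees at least three robots below it and hence never crosses it, so one landing per epoch gives the $O(n)$ bound. (One caution even here: you claim that \emph{all} robots on the horizontal line nearest the diameter have an unobstructed view of $r_1$ and $r_2$; due to opacity this can fail, and the paper only needs --- and only proves --- the \emph{existence} of one such robot.)

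The genuine gap is in your treatment of the misinterpretation case, which the paper's proof spends most of its length on. First, when the single misinterpreting robot $r$ crosses below $\overleftrightarrow{r_1r_2}$, up to \emph{two additional} robots (those lying on $\overleftrightarrow{rr_1'}$ and $\overleftrightarrow{rr_2'}$) may legitimately cross the diameter line in the very same round, so the recovery must be analyzed for a configuration with as many as three robots below the old bottom line; the paper splits this into two sub-cases (three crossers vs.\ two) and tracks exactly how the diameter robots descend and the crossed robots realign onto a common horizontal line. Second, and more seriously, your claim that ``at most one correction cycle occurs by Lemma~\ref{lemma:L2}'' over-reads that lemma: Lemma~\ref{lemma:L2} only bounds the number of misinterpreting robots \emph{from a given triangular configuration}. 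After your corrective application of \textsc{Arbitrary-To-Triangular} restores a (new) triangular configuration, Lemma~\ref{lemma:L2} by itself would permit a fresh misinterpretation, and a chain of $\Theta(n)$ correction rounds would destroy the $O(n)$ bound. The paper closes exactly this loop by showing that after the recovery the crossed robots sit on a single horizontal line (or can see the true diameter robots), so that from then on every robot above the bottom line sees at least three robots below itself and never crosses it again --- i.e., the bottom-most line can change at most once over the \emph{entire} execution. Without this argument (or an equivalent one), your induction does not terminate in $O(n)$ epochs; relatedly, your stated invariant that the occupied set on $\lsc(r_1,r_2)$ grows in every epoch is false during the recovery epochs, when no landing need occur.
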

\subsection{Stage: \textsc{Semicircular-To-GridPoint}}
\label{subsec:SemiciricleToGridPoint}
In this stage of the algorithm, the robots settle onto grid points so that the leader robots can sequentially signal the others to relocate to their assigned parts of the semicircle.
When a robot $r$ observes that all the visible robots lie on a semicircle whose diameter is perpendicular to the $y$-axis, with two robots positioned at the endpoints of this diameter, it executes the following algorithm. We begin with the computation of grid points, followed by the movement to those points. 

\noindent \textbf{Grid-Point Computation:} First, the semicircle is divided into \(2k\) arcs of equal arc-length, where $k$ is the total number of colors known to all robots.
Each arc is called a \textit{sector}, subtending an angle $\frac{\pi}{2k}$ with the centre of the semicircle $cen(\lsc(r_1, r_2))$, where $r_1$ and $r_2$ are the diameter robots.
The sectors are indexed from the diameter $\overline{r_1r_2}$ downward toward the $low(\lsc(r_1, r_2))$, numbered sequentially from 1 to $k$.
The $j$-th sector is the $j$-th division of the semicircle from the diameter $\overline{r_1 r_2}$ toward the $low(\lsc(r_1, r_2))$, counted along each half.
Each sector from both halves of the semicircle is mapped to a unique color of $C$, with the aim that all robots sharing the same color occupy the same sector.
Note that the set \(C\) is totally ordered, and the order is known to all robots in advance.
For instance, if \(C = \{c_1, c_2, c_3\}\) with $c_1 < c_2 < c_3$, then all the robots with the color $c_i$ are aiming to assign on the $i$-th sector $(1\leq i \leq 3)$.

We further subdivide each arc into \(n\) equal smaller arcs.
This results in \( 2kn \) equal arcs in the semicircle, each subtending an angle \( \alpha = \frac{\pi}{2kn} \) with the center.
The endpoints of these arcs are termed \textit{grid points}.
Note that although the robots do not initially know $n$, they can compute it from the fact that all robots are currently positioned on the semicircle.

\noindent \textbf{Movement to Grid-Points:} Using the grid points as reference, a robot $r$ checks whether all robots are located on the grid points. If not, it computes its destination position as follows:

The grid points are indexed according to their angular positions with respect to center and the diameter. A grid point located at an angle \(i\alpha\) is indexed as \(i\).
The robot $r$ first determines the half of the semicircle in which it currently belongs and then finds its destination point within that half.
If $r$ finds itself as the \( m \)-th robot from the top in its half, i.e., there are $(m-1)$ robots vertically above it within the same half, it computes the grid point $t_r$ with index \(m-1\), as its target.
If there is no other robot on the closed arc of the semicircle segmented by its current position and $t_r$, the robot $r$ moves to the grid point $t_r$. Otherwise, it remains in place until all those robots settle into their respective target positions.


\begin{restatable}{lem}{gridpoint}
    \label{lem:gridPointConfig.}
    From the semicircular configuration, all the robots reach to a grid-point configuration in $O(n)$ epochs using \textsc{Semicircular-To-GridPoint}.
\end{restatable}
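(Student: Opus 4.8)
The plan is to show that, although the robots are opaque, the semicircular configuration restores full visibility, after which the stage behaves like a one-dimensional compaction of the robots on each half of the arc toward the diameter; the running time is then bounded by a monovariant that counts the robots not yet on their target grid point. First I would argue the visibility and agreement. Once all robots lie on $\lsc(r_1,r_2)$, no robot obstructs another: a straight line meets a circle in at most two points, so no three of the distinct robot positions are collinear, and hence every robot sees all the others and holds the same global snapshot. This is exactly what makes the stage well defined: each robot correctly counts the number $n$ of robots on the arc, reconstructs the $2kn$ grid points, identifies its half (determined by the vertical line through $cen(\lsc(r_1,r_2))$), computes its rank $m$ from the top within that half, and hence its target index $m-1$. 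Since $k\ge 1$, a half carries at most $n$ robots while offering grid indices $0,\dots,kn$ with $kn\ge n$, so each target $m-1\le n-1<kn$ is a legitimate grid point; within a half the ranks, hence the targets, are pairwise distinct, and the diameter robot, being topmost, has rank $1$ and target index $0$, so it already sits on its target and never moves.

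Next I would prove collision-freeness and the preservation of order. The diameter robots and all target points lie strictly inside their own halves and a robot always moves to a target of its own half, so the two halves evolve independently and I may reason about one of them. Within a half the rule ``move to $t_r$ only when the closed arc between the current position and $t_r$ is free of other robots'' carries the argument: the closedness forbids moving onto an occupied point, while the order-preserving assignment $m\mapsto m-1$ prevents a mover from overtaking a lower-ranked robot. A short case check on any two simultaneous movers (both ascending, both descending, or one of each) shows that the sub-arcs they traverse are pairwise disjoint; hence no two robots ever coincide and the top-to-bottom order within a half is invariant throughout the stage.

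The core of the argument is the progress claim that yields the $O(n)$ bound. Because every target index is at most $n-1$, all targets are packed into the sector adjacent to the diameter, so the robots must compact toward the diameter endpoint of their half. Call a robot \emph{settled} when it occupies its target grid point; a settled robot then targets itself and remains fixed, and by rigidity any robot that moves in a cycle reaches its target in that cycle. I would show that in every non-final configuration at least one unsettled robot is free to move: the topmost unsettled robot that lies below its target can always ascend, and a bunched group that must descend is forced by the clear-arc rule to cascade one robot at a time, each cascade step settling one more robot. Taking the number of unsettled robots as a monovariant, it strictly decreases every epoch, so within $O(n)$ epochs all robots occupy the packed grid points $0,\dots,p-1$ of their half; the resulting configuration is stationary and is precisely a grid-point configuration.

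The step I expect to be the main obstacle is this progress claim under the $\mathcal{SSYNC}$ scheduler combined with the blocking rule. I must rule out a deadlock in which every unsettled robot is blocked by its immediate neighbour and, simultaneously, verify that the forced cascades cost only $O(n)$ epochs rather than more. The delicate sub-case is a robot lying strictly above its target, which must descend to fill a gap while being blocked by the very neighbour it waits for; establishing that in every configuration the extremal robot of each maximal blocked block can nonetheless move, and that this alone settles at least one robot per epoch, is where the careful monovariant bookkeeping carrying the $O(n)$ bound will be required.
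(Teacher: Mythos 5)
Your proposal is correct and follows essentially the same route as the paper's proof: full visibility on the semicircle yields agreement on $n$ and hence on the grid points, the closed-arc rule gives collision-freeness, and progress of at least one robot settling per epoch gives the $O(n)$ bound. The step you flag as the main obstacle --- that the extremal robot of each maximal blocked block always has a free arc and therefore settles once activated --- is precisely the progress claim the paper itself asserts (with even less justification than your sketch), and your argument for it is sound, so there is no gap.
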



\subsection{Stage: \textsc{GridPoint-To-SectorPartitioning}}
\label{subsec:gridpointToSector}
In this stage of the algorithm, our objective is to position the robots into their designated sectors.
We divide this stage into several parts.
Once all robots are aligned on the grid points, one or two robots are designated as \textit{leader} robots (\textit{leader identification}).
If necessary, these leaders first move to specific grid points, after which the remaining robots relocate to their assigned sectors based on the positions of the leader (\textit{signalling procedure}).
After that, leader robots need to be repositioned into their sectors (\textit{leader repositioning}).
Note that the diameter robots may also need to relocate, as they might not belong to their designated sectors. 
In such cases, once they move, the original reference points of the semicircle, namely $r_1$ and $r_2$, will no longer be present. 
However, since the robots agree on the $y$-axis and $n > 3$, they can still agree on a common semicircle. 
In particular, there always exists a unique semicircle passing through them whose diameter is perpendicular to the $y$-axis. 
Therefore, we simply refer to this semicircle as $\lsc$, instead of $\lsc(r_1, r_2)$.

\noindent \textbf{Leader Identification:} There can be two cases:
\begin{itemize}
    \item If there is a robot, say $r_{\ell}$, on the lowest point on the semicircle, $low(\lsc)$, it designates itself as the leader robot.
    \item On the other hand, the two robots located on the bottommost line of each half (left and right) of the semicircle are chosen as leaders.
    Let $r_{\ell}^1$ and $r_{\ell}^2$ denote those robots.
    The leader robot $r_\ell^1$ (resp. $r_\ell^2$) moves to the grid point $p_1$ (resp. $p_2$), where $p_1$ (resp. $p_2$) is the grid point next to $low(\lsc)$ and nearest to $r_{\ell}^1$ (resp. $r_{\ell}^2$). 
\end{itemize}
Subsequently, the leader robot(s) initiate the signalling process, as described below, by moving to a designated point on the arc between $p_1$ and $p_2$, referred to as \textit{signalling arc}. 
Any robot(s) positioned on this arc is considered a leader robot(s) and can be recognized as such by all other robots.
 
\noindent \textbf{Signalling Procedure:}
We divide this procedure into two cases, depending on the number of leader robots: single or dual.

\noindent\textbf{Case 1: (A single leader robot $r_{\ell}$ exists):}
Let $len$ be the length of the arc of the semicircle that subtends an angle \( \alpha \) at the centre $cen(\lsc)$. Thus, $len = rad(\lsc) \cdot \frac{\pi}{2kn}$, where $rad(\lsc)$ is the radius of the semicircle $\lsc$.
Now we define the \textit{signalling step} size as: $\tau = \frac{len}{(nk)^2}$. Next, if the leader $r_{\ell}$ intends to signal a robot located at grid point indexed $s_1$ to move to grid point indexed $s_2$, it moves to a position on the semicircle that is $g$ signalling steps away from $low(\lsc)$, towards the signalled robot, where $g = s_1 \cdot nk + s_2$.

When a robot $r$ finds another robot on the arc between the points $p_1$ and $p_2$, it recognizes that robot as the leader robot $r_{\ell}$, where $p_1$ and $p_2$ are the grid points adjacent to $low(\lsc)$.
If both $r$ and $r_{\ell}$ lie on the same half of the semicircle, the robot $r$ computes $g$, where $g\cdot \tau$ is the arc distance between $low(\lsc)$ and $r_{\ell}$.
It then obtains $s_1$ and $s_2$ from the relations: $s_1= \left\lfloor \frac{g}{nk} \right\rfloor$ and $s_2=g\mod(nk)$.

If the robot $r$ occupies the grid point indexed $s_1$, this implies that the leader robot is signalling to $r$. In that case, $r$ moves to the grid point indexed $s_2$ in its own half of the semicircle.
If the grid point $s_2$ is already occupied, $r$ instead moves to the corresponding grid point $s_2$ in the opposite half. This situation arises only when the leader robot signals a robot to move to a diameter point(i.e., when $s_2=0$), as described below.

The leader robot $r_\ell$ signals the other robots sequentially, with a certain prioritization. It first assigns the diameter positions to the robots of color $c_k$, which is the highest-ordered color in $C$.
Such a settlement is treated as an exception, since the general objective is to gather the robots of color $c_i$ into the $i$-th sector ($1\leq i \leq k$).
This is done for the final configuration, where all the robots of the color $c_k$ are placed on the innermost semicircle with two robots of color $c_k$ occupying diameter points.
To achieve this, several cases may arise based on the occupancy of diameter points and the number of visible $c_k$-colored robots by $r_\ell$:
\begin{itemize}
    \item If the diameter points are already occupied by robots of colors other than $c_k$, then the leader robot $r_\ell$ first signals those robots to move to their designated sectors.
    \item If both the diameter points are unoccupied and $r_\ell$ finds at least two robots of color $c_k$, it arbitrarily selects two of them and signals sequentially to move to the diameter points.
    \item If both the diameter points are unoccupied and $r_\ell$ finds only one robot of color $c_k$, then $r_\ell$ itself must be of the color $c_k$, as we assume that at least two robots of each color exist. In this case, $r_\ell$ first signals the only visible robots of color $c_k$ and then all the robots that are not in their designated sectors. Thereafter, it moves to the unoccupied diameter points.
    \item If exactly one diameter point is unoccupied and $r_\ell$ finds only one robot of color $c_k$ other than that lying on the diameter point, the leader robot $r_\ell$ signals that robot to move to the unoccupied diameter point.
    \item If exactly one unoccupied diameter point exists and no visible robot of color $c_k$ other than the one already on a diameter point, then $r_\ell$ moves to that unoccupied point after all other robots are separated into sectors.
\end{itemize}

Apart from signaling robots of color $c_k$ to occupy the diameter points, the leader robot $r_{\ell}$ signals the other robots as follows. It first determines $s_1$ by identifying the topmost robot $r$ that is not currently positioned in its designated sector. 
Recall that a robot of color $c_j$, the $j$-th lowest-ordered color in $C$, is assigned to the $j$-th sector. 
The leader can determine the correct sector of $r$ from its color. 
If two such robots exist, the leader selects one at random. 
It then determines $s_2$, the index of the topmost unoccupied grid point in the $j$-th sector.

When the leader robot $r_{\ell}$ finds itself on the arc between $p_1$ and $p_2$, it computes $g$, $s_1$, and $s_2$, as described above. 
If $r_{\ell}$ observes that the grid point indexed $s_1$ is unoccupied, it restarts the signalling procedure. This process is continued until the leader finds all the robots placed in their respective sectors, with two robots of color $c_k$ (if they exist) occupying the diameter points.
Afterwards, if both the diameter points are occupied with the robots of color $c_k$, $r_\ell$ moves to $low(\lsc)$. However, if one of them remains unoccupied, the leader moves there, as discussed in the above sub-cases.




\noindent\textbf{Case 2: Two leader robots $r_{\ell}^1$ and $r_{\ell}^2$ exist.}
In this case, both leaders \( r_{\ell}^1 \) and \( r_{\ell}^2 \) initiate signalling procedure within their respective halves, analogous to the single-leader case. 
The key distinction arises when the signalling robots of color $c_k$.
Each leader first checks whether there is any $c_k$-colored robot in its own half.
If such a robot exists, it signals that robot. Otherwise, it determines the number of $c_k$-colored robots in the other half. If there are two such robots, the leader proceeds with signalling the robots in its own half.
If there is only one robot of color $c_k$ in the other half, then the leader itself must be of color $c_k$. In that case, it moves to the unoccupied diameter point after all the robots get separated into sectors.
Finally, once all robots are placed in their respective sectors, with two $c_k$-colored robots at the diameter points, the leader robots $r_\ell^1$ and $r_\ell^2$ (if they still exist) respectively move to the grid point $p_1$ and $p_2$, instead of $low(\lsc)$.

\noindent \textbf{Leader Repositioning:} 
After all the robots settle within their sectors, the leader robot(s) need to be positioned in the correct designated sector, if not correctly positioned.
All the robots except the leader can detect whether the leader robot is in the correct sector or not.
We again divide this procedure into two cases as before.

    \noindent \textbf{Case 1: (A single leader robot $r_\ell$ exists):} After the signalling procedure, the leader robot $r_\ell$ must be positioned at $low(\lsc)$. Upon seeing a robot on $low(\lsc)$ and all other robots on their designated sector, the nearest robot $r$ of color $c_j$, that matches with the leader, projects itself horizontally onto the line joining two neighbouring grid-points of its current position.
    This action signals to the leader that $r$ shares the same color.
    When the leader robot at $low(\lsc)$ observes $r$ on the line joining grid points indexed $i$ and $(i+2)$, for some $i$, while all other robots are placed on the correct sector, it moves to one of the unoccupied grid points of the sector containing the grid point indexed $(i+1)$. 
    However, if the leader $r_\ell$ instead sees another robot (other than $r$) outside its designated sector, it disregards this signal and continues the signalling process.
    Finally, whenever $r$ sees that the leader is no longer at $low(\lsc)$, it returns to its grid point indexed $(i+1)$, the nearest intersection of $L_r$ and semicircle $\lsc$.

\begin{figure}[t]
    \centering
    \begin{subfigure}[b]{0.45\linewidth}
        \centering
        \includegraphics[width=0.8\linewidth]{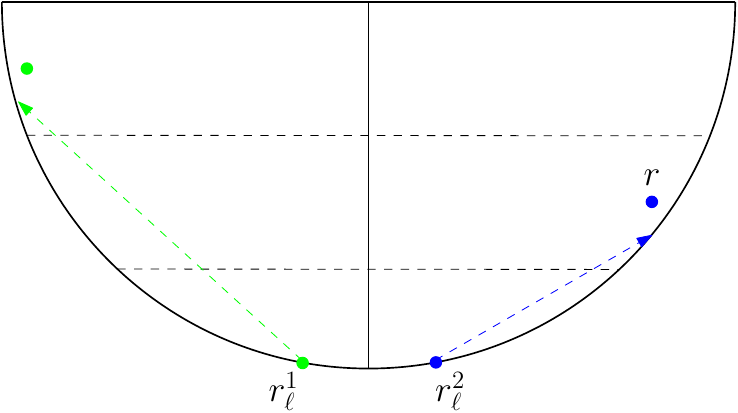}
        \caption{Robot moves horizontally on the line joining the neighbouring grid points to signal its half's leader.}
        \label{fig:Leader_repositioning_1}
    \end{subfigure}
    \hfill
    \begin{subfigure}[b]{0.45\linewidth}
        \centering
        \includegraphics[width=0.8\linewidth]{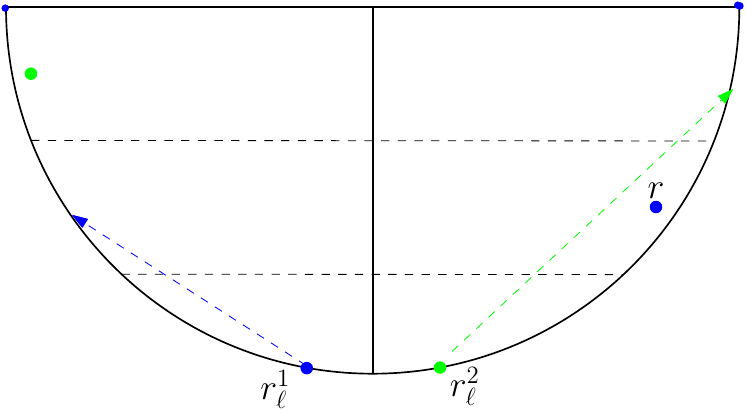}
        \caption{Robot moves the midpoint of the line joining the neighbouring grid points to signal other half's leader.}
        \label{fig:Leader_repositioning_2}
    \end{subfigure}
    \caption{Leader repositioning with different inward distances.}
    \label{fig:Leader_repositioning}
\end{figure}
    
    \noindent \textbf{Case 2: (Two leader robots $r_{\ell}^1$ and $r_\ell^2$ exist):} After the signalling procedure, the leader robots occupy positions $p_1$ and $p_2$.
    Upon seeing a robot or two robots on $p_1$ or $p_2$ or both, and all other robots on their designated sectors, the robot $r$, that is vertically nearest to a leader robot and sharing the same color, initiates signalling. 
    It does so by projecting itself either horizontally on the line joining the neighbouring grid-points or on the midpoint of that line.
    If the robot $r$ is signalling to the leader in the same half, the robot horizontally projects itself horizontally on the line joining the neighbouring gridpoints. 
    If instead it is signalling to the leader in the opposite half, it moves to the midpoints of that line.
   Depending on the position of the signalling robot, the leader robot(s) move to the unoccupied grid point of the sector, within their own half, to which the signalling robot belongs (see Fig. \ref{fig:Leader_repositioning}).
    Once the leader robot moves from $p_1$ (or $p_2$ ), which means there is no robot on $p_1$ (or $p_2$), the signalling robots return to their original grid point, which is either the nearest grid point lying on $L_r$ or the grid point in its own half.

\begin{restatable}{lem}{gridtosector}
\label{lem:sectorPart}
    From the grid-point configuration, all the robots reach to their designated sectors in $O(n)$ epochs using \textsc{GridPoint-To-SectorPartitioning} without collision. 
\end{restatable}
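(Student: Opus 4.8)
The plan is to prove the three claims of the lemma — termination in $O(n)$ epochs, arrival of every robot in its designated sector, and collision-freeness — by combining a monotone potential argument for the signalling loop with a separate verification that the leader's positional encoding is unambiguous. Define the potential $\Phi$ to be the number of robots whose current grid point is not a correct terminal position, where a terminal position means the $j$-th sector for a robot of color $c_j$ with $j<k$, the $k$-th sector for a $c_k$-robot once the two diameter slots are filled, and a diameter point for the (at most two) $c_k$-robots designated to occupy them. First I would check that the grid-point configuration is a legitimate entry point for the stage: since every robot lies on a grid point of $\lsc$, each active robot correctly recognizes the common semicircle and, by the stated criterion, identifies the leader(s) — the robot on $low(\lsc)$, or else the two bottommost robots of the halves after they reposition to $p_1$ and $p_2$. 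This identification together with the leaders' initial moves costs $O(1)$ epochs.

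The crux of correctness is the injectivity of the encoding. Within a half the relevant grid indices range over $O(nk)$ values, so the composite value $g=s_1\cdot nk+s_2$ with $\tau=\mathit{len}/(nk)^2$ satisfies $g\tau<\mathit{len}$; hence the leader always stays strictly inside the signalling arc between $p_1$ and $p_2$ (each of which is one sub-arc of length $\mathit{len}$ from $low(\lsc)$) and is therefore recognized as a leader by everyone, while the map $g\mapsto g\tau$ is injective, so a decoding robot recovers $(s_1,s_2)$ unambiguously via $s_1=\lfloor g/nk\rfloor$ and $s_2=g\bmod nk$. I would then argue that exactly the intended robot — the one on grid index $s_1$ in the leader's half — responds and moves to index $s_2$, with the opposite-half placement invoked precisely when $s_2=0$ encodes a diameter slot. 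The many $c_k$/diameter sub-cases and the two-leader variant each reduce to this same injective primitive and are verified one by one.

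For progress and the epoch bound, observe that each completed signalling round relocates one misplaced robot into a correct terminal position: the leader always signals the topmost misplaced robot toward the topmost free grid point of its target sector, and since every sector holds $n$ grid points, at least as many as the number of robots sharing that color, a free slot always exists, so the relocated robot is placed correctly and is never signalled again. Thus $\Phi$ strictly decreases by one per successful round and never increases, giving at most $O(n)$ rounds; as each round is a constant number of LCM cycles (the leader encodes, the signalled robot decodes and moves, the leader observes completion and restarts), and leader identification, the moves to $p_1,p_2$, and leader repositioning each add only $O(n)$ epochs, the stage terminates in $O(n)$ epochs.

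Collision-freeness then follows from sequentiality: at most one non-leader robot is ever signalled at a time, and it traverses $\lsc$ to its target only when no robot lies on the closed arc between its source and destination, exactly as in \textsc{Semicircular-To-GridPoint}, while the leader's motion is confined to the signalling arc, which is kept clear of non-leaders. The hard part, and where I expect to spend most of the care, is the case analysis governing the two diameter points for color $c_k$: I must rule out any livelock in which the leader repeatedly restarts without decreasing $\Phi$ (for instance when a diameter slot is partially occupied, or when the leader is itself the $c_k$-robot that must ultimately take a diameter point), and I must show that the leader-repositioning signals — a robot projecting horizontally onto the chord between neighbouring grid points, or onto its midpoint to address the opposite half — are decodable without being confused with ordinary signalling and without collision. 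Folding all of these corner cases into the same monotone-$\Phi$ argument while preserving the $O(n)$ bound is the main obstacle.
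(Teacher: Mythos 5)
Your skeleton---the injectivity of the encoding $g = s_1\cdot nk + s_2$ with $\tau = \mathit{len}/(nk)^2$ and $g\,\tau < \mathit{len}$, decoding via $s_1=\lfloor g/nk\rfloor$, $s_2 = g \bmod nk$, and a per-round progress count giving $O(n)$ epochs---is essentially the paper's own argument (the paper phrases progress as ``in at most two consecutive epochs at least one robot settles'' rather than as a potential $\Phi$, but it is the same count). However, your collision-freeness argument has a genuine gap. You claim that ``at most one non-leader robot is ever signalled at a time,'' which is false in the two-leader case: each leader signals within its own half, so two robots can move in the same cycle. You also ground safety in an arc-clearance precondition (``moves only when no robot lies on the closed arc between its source and destination, exactly as in \textsc{Semicircular-To-GridPoint}''), but no such precondition exists in this stage: a signalled robot moves to its target grid point along a straight segment, i.e., a chord. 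The correct argument---the one the paper gives---is geometric: a chord joining two points of one half of the semicircle lies entirely in that half, so two simultaneously moving robots, one per half, travel on non-intersecting segments; the only cross-half moves are to a diameter point, the two leaders never issue such signals simultaneously, and a chord into a diameter point meets a same-half chord only at the diameter point itself. Stationary robots are never hit because they sit on the semicircle while movers travel on chords strictly inside it.

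Second, you explicitly defer the $c_k$/diameter-point case analysis and the decodability of the leader-repositioning signals as ``the main obstacle'' rather than proving them, yet these are part of the statement: the lemma requires that \emph{all} robots, including the leader(s), end up in their designated sectors without collision. The paper does close these cases: once the diameter points are occupied and every non-leader is in its sector, within a constant number of further epochs the nearest robot sharing the leader's color signals it by leaving its grid point for the chord between neighbouring grid points (a position that cannot be confused with a point of the signalling arc), and the leader's final move is again a chord covered by the same geometric non-intersection argument. As submitted, your proposal establishes the encoding and the epoch bound, but leaves both the two-leader collision case and the leader's own relocation unproven.
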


\subsection{Stage: \textsc{SectorPartitioning-To-ConcenSemiCirc}}
\label{subsec:sectorToSeparartion}
After all the robots are partitioned into the sectors, in this stage, the robots separate themselves into concentric semicircles.
Once all robots are separated into their assigned sectors, and no robot is in the signalling arc.
The robots that are not on the diameter points of the semicircle execute the following algorithm to separate into concentric semicircles.
The robots at the diameter points act as the reference. The line joining them is the diameter of the innermost semicircles.
A robot $r$, not on the diameter points, checks the following conditions:
\begin{itemize}
    \item All the robots, except diameter robots, are in their designated sectors.
    \item The signalling arc of the semicircle (the arc joining $p_1$ and $p_2$, excluding the endpoints and $low(\lsc)$) is empty.
\end{itemize}
If the robot $r$ finds both conditions true, it calculates its next destination. The robot knows the total number of colors  $k$. The lowest sector ($k$-th sector) of the semicircle in each half is assigned the number 1, and the topmost sectors in the semicircles are assigned the number $k$. 
In general, the $i$-th sector in each half is numbered $(k-i+1)$.
Now let $i$ be the number assigned to the sector containing $r$. The destination point $t_r$ of the robot $r$ is defined as the intersection of: the concentric semicircle with radius $i \cdot rad(\lsc)$ and line $\overleftrightarrow{rr_d}$, where $rad(\lsc)$ is the radius of the smallest semicircle $\lsc$, where $r_d$ is the diameter robot in the opposite half of the semicircle.
Finally, $r$ moves to $t_r$ only if all the robots above it, except the diameter robot, have already moved to their respective semicircle.

However, if $r$ finds that the robot $r'$ directly above it is not on the correct semicircle corresponding to its color, $r$ moves to the midpoint between two consecutive grid points on the semicircle. This temporary move by $r$ serves as a signal to $r'$ for returning to the smallest semicircle.  
Such a situation arises when $r$ mistakenly interprets itself as the robot active in the current stage, while in fact the signalling procedure has not yet completed. By deliberately occupying a non-grid point, $r$ enables the misinterpreted robot to return to the innermost semicircle. This ensures that the grid-point configuration first needs to be restored, and after which the signalling procedure can proceed correctly.

Note that, after some robots move from the innermost semicircle to the outer semicircles, the previous diameter robot may no longer remain the diameter. Recall that a diameter robot is defined as the vertically lowest terminal robot on the vertical side of the enclosing rectangle. Here, we slightly misuse this definition: a robot $r$ considers two robots $r_d$ and $r'_d$ as the diameter robots if all other robots either (i) lie on the correct sectors of the semicircle $\lsc(r_d, r'_d)$, or (ii) are on their designated concentric semicircle with the innermost semicircle defined by $\lsc(r_d, r'_d)$.
At the end of this stage, robots are separated into concentric semicircles, as argued in Lemma \ref{lem:finalSeparation}. Fig. \ref{fig:final-config} represents the grid point to the respective semicircle movement, and attaining the final configuration, where red robots and green robots are the highest and lowest ordered colored robots. 

 \begin{figure}[t]
    \centering
    \begin{subfigure}[b]{0.45\linewidth}
        \centering

     \includegraphics[width=0.9\linewidth]{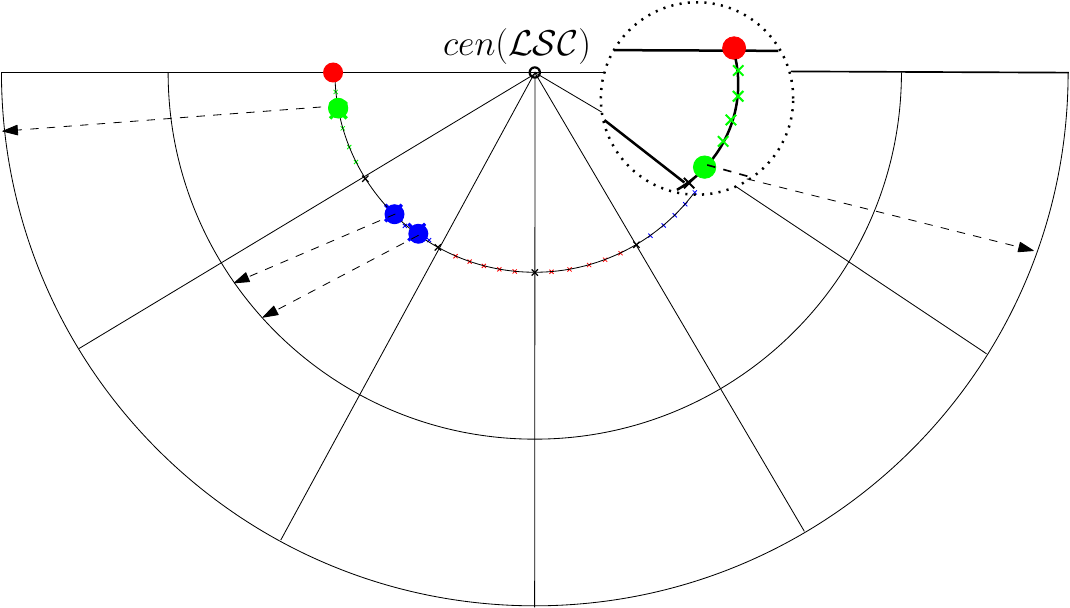}
     \caption{Illustrates the movement of the robots from their gridpoint position, to their respective semicircle}
     \label{fig:gridpointmovement}
 \end{subfigure}
    \hfill
    \begin{subfigure}[b]{0.45\linewidth}
        \centering
        
\includegraphics[width=0.9\linewidth]{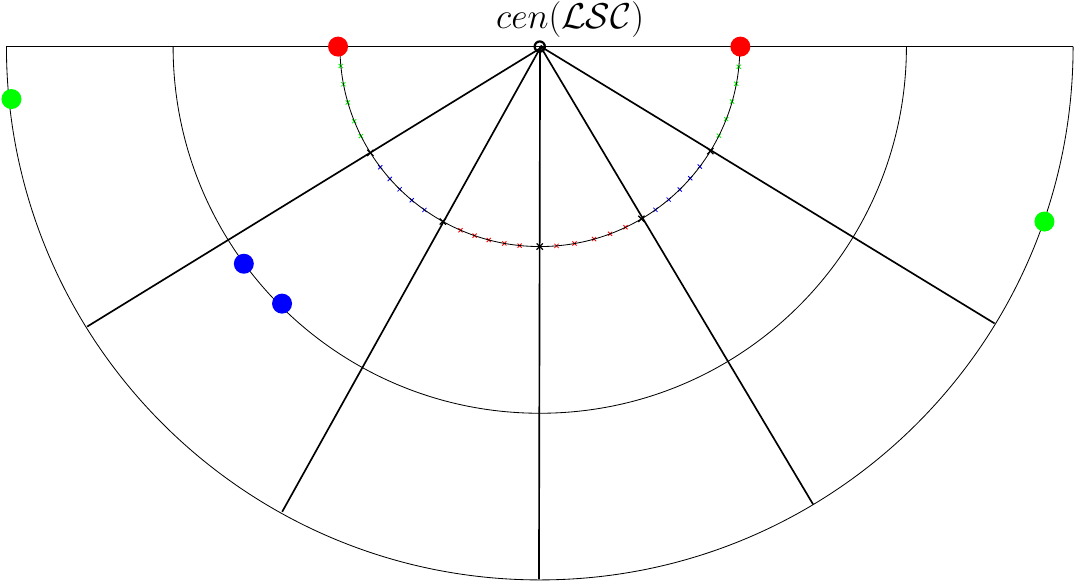}
     \caption{Represents the final configuration, where the robots are in their respective semicircle}
     \label{fig:finalconfiguration}
 \end{subfigure}
 \caption{Movement of robots to final configuration}
 \label{fig:final-config}
 \end{figure}
 
\begin{restatable}{lem}{separation}
    \label{lem:finalSeparation}
    After the robots partitioned into sectors, all the robots proceed to their respective semicircle for the final separation using \textsc{SectorPartitioning-To-ConcenSemiCirc} in $O(n)$ epochs without collision.
\end{restatable}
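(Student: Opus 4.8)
The plan is to fix a coordinate frame in which the common diameter line of $\lsc$ lies on the $x$-axis with centre $cen(\lsc)=O$ at the origin and the semicircle occupying the lower half-plane, so the diameter robots sit at $(\pm\,rad(\lsc),0)$ and $low(\lsc)=(0,-rad(\lsc))$. By Lemma~\ref{lem:sectorPart} this stage begins with every non-diameter robot on a grid point of its designated sector, the signalling arc empty, and two robots of colour $c_k$ on the diameter points. I would first record the geometric fact that drives the stage: a robot $r$ in the $i$-th sector (numbered $k-i+1$) moves along the ray from the opposite diameter robot $r_d$ through $r$ to the point $t_r$ where this ray meets the concentric semicircle of radius $(k-i+1)\cdot rad(\lsc)$. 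Writing the moving point as $r_d+t(r-r_d)$ and using $|r_d|=|r|=rad(\lsc)$, the quadratic $|r_d+t(r-r_d)|^2=(j\cdot rad(\lsc))^2$ with $j\ge 1$ satisfies $f(0)=f(1)=rad(\lsc)^2$, so for $j\ge 2$ both roots lie outside $(0,1)$ and $t_r$ is attained at some $t>1$, i.e.\ strictly beyond $r$ and away from $r_d$; for $j=1$ we get $t=1$, so $c_k$-robots stay on $\lsc$ while $c_1$-robots travel farthest. This pins down the radial structure I exploit below.

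Next I would establish collision-freeness purely geometrically. For two robots in the same half, $r_d$ is a common apex, and since all robots lie on the circle of radius $rad(\lsc)$ no three of them (including $r_d$) are collinear, so distinct robots give distinct rays through $r_d$, which meet only at $r_d$; as each trajectory is the sub-segment with $t>1$ it excludes $r_d$, hence same-half trajectories are pairwise disjoint and also avoid the current positions of all other same-half robots. For robots in opposite halves, the ray from $r_d=(+rad(\lsc),0)$ through a left-half point has negative $x$-direction, so every left-half destination keeps $x<0$, and symmetrically every right-half destination keeps $x>0$; the two families are separated by the vertical line $L_{O}^{\perp}$ and cannot meet. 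Together with the rule that $r$ moves only after every robot above it has reached its semicircle, no two robots ever target or traverse the same point, which gives collision-freeness.

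For correctness I would argue by induction along the top-to-bottom order induced by the activation rule: the topmost not-yet-separated robot is the only one enabled, and once activated it reaches $t_r$ in a single rigid move and stays there, enabling the next robot. Since sector membership equals colour, all robots of colour $c_i$ land on the semicircle of radius $(k-i+1)\cdot rad(\lsc)$ and distinct colours occupy distinct radii; identifying $s_{k-i+1}$ with colour $c_i$ yields exactly the predicate \texttt{SepSC}, and once every robot lies on its semicircle no case of the stage triggers a further move, so the configuration is stationary. The delicate point is the interaction with obstructed visibility: under a partial view a robot may mistake an ongoing signalling phase for the start of this stage and move outward prematurely. I would use the detection-and-recovery rule—the robot directly below a wrongly placed $r'$ steps to the midpoint between two consecutive grid points, which $r'$ reads as a command to return to $\lsc$, restoring the grid-point configuration before signalling resumes. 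Here I would show (i) the anomaly is detectable by the robot immediately below $r'$, so the corrective signal is well-defined; (ii) the midpoint and the return path use points disjoint from all radial trajectories, preserving collision-freeness; and (iii) the event is bounded, contributing only $O(1)$ extra epochs.

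Finally, the epoch count follows from the ordering: there are at most $n$ ``levels'' in the top-to-bottom sequence, each cleared within one epoch because movement is rigid and under $\mathcal{SSYNC}$ every robot is activated at least once per epoch; adding the bounded recovery overhead gives $O(n)$ epochs. I expect the main obstacle to be precisely part (iii): proving that the premature-move/recovery loop caused by opaqueness always terminates and never cascades—so that the grid configuration is restored only boundedly often and the collision-free radial separation is never re-corrupted. Establishing this termination rigorously, rather than the comparatively clean geometric collision argument, is the heart of the proof.
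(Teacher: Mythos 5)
Your outline follows essentially the same route as the paper's proof: destinations determined by sector number and $rad(\lsc)$, outward radial moves along $\overleftrightarrow{rr_d}$, top-to-bottom scheduling for progress, the midpoint-signal recovery mechanism, and an $O(n)$ epoch count. Your collision analysis is in fact \emph{more} rigorous than the paper's (the paper only asserts that at most one robot per half moves per cycle and that destinations are distinct, whereas you show the trajectories themselves are disjoint via the common-apex ray argument and the sign of the $x$-coordinate across halves). However, you explicitly leave unproven the step you yourself call the heart of the proof --- part (iii), the termination and non-cascading of the premature-move/recovery loop --- and that is a genuine gap: a plan that says ``the event is bounded, contributing only $O(1)$ extra epochs'' and then admits it cannot yet establish this has not proved the lemma's $O(n)$ bound.

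The idea that closes this gap in the paper is a visibility invariant you did not state: \emph{throughout this stage, every robot can see both diameter robots}. The reason is geometric and simple: any robot that has left $\lsc$ sits on a semicircle of radius at least $2\cdot rad(\lsc)$, strictly outside the inner disk, while the segment joining any robot on $\lsc$ to a diameter robot is a chord of the inner circle and hence lies inside the closed inner disk; so no outward-moved robot can ever obstruct that line of sight. Consequently (a) under the lemma's hypothesis a robot never ``loses'' the reference frame, so wrong-radius moves stem only from residual overlap with the signalling stage, not from anything this stage itself creates; (b) a misplaced robot, still seeing both diameter robots, can return to the innermost semicircle in a single rigid move once signalled, so each recovery costs $O(1)$ epochs; and (c) cascading is impossible because a robot correctly settled on its designated semicircle satisfies no rule that would move it again, so recoveries never undo progress, and the per-epoch progress argument (at least one robot per half settles permanently) yields $O(n)$ epochs overall. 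Without some form of this invariant --- or another argument that recoveries are monotone and locally confined --- your induction along the top-to-bottom order is not sound, because a single undetected loss of the reference diameter pair could, in principle, invalidate every subsequent destination computation.
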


\subsection{Analysis of the Algorithm \textsc{Con-SemCirc-Separation}}
\label{app:analysis}
Here, we establish the correctness of the algorithm in achieving collision-free separation of robots into concentric semicircles according to their colors, along with the time complexity analysis.

\initialtotriangular*

\begin{proof}
    Let $r_1$ be a diameter robot. By definition, one of the half-planes, say $\mathcal{H}_{r_1}$ delimited by the line \(L_r^{\perp}\) must be empty. Let $\tilde{L}_{r_1}$ be a line passing through $r_1$ making an angle of \(\pi/4\) with positive \(y\)-axis towards the half plane $\mathcal{H}_{r_1}$. The $\tilde{L}_{r_1}$ divides the Euclidean plane into two half planes:
    \begin{enumerate}
        \item the open half plane above $\tilde{L}_{r_1}$, denoted as $\widetilde{\mathcal{H}}^{up}_{r_1}$.
        \item the closed half plane below $\tilde{L}_{r_1}$, denoted as $\widetilde{\mathcal{H}}^{down}_{r_1}$.
    \end{enumerate}
    \textit{Claim 1:} \textit{Starting from the initial configuration, within $O(n)$ epochs, the diameter robot $r_1$ reaches a point such that all robots lie in $\widetilde{\mathcal{H}}^{down}_{r_1}$.}\\
    \textit{Proof:} According to the algorithm, the diameter robot \(r_1\) only moves to the empty half $\mathcal{H}_{r_1}$ horizontally or moves downwards along the line $\tilde{L}_{r_1}$.
    All the non-diameter robots that are not on the bottom line of $\delta(P(t))$ either remain stationary or move downwards by executing \textsc{Triangular-To-Semicircular}.
    So they never cross the line $\tilde{L}_{r_1}$.
    Whereas, a non-diameter robot on the bottom line of $\delta(P(t))$ may cross the line $\tilde{L}_{r_1}$, but they all move to the same horizontal line, which is vertically above themselves.
   Even in that case, to include all such robots in $\widetilde{\mathcal{H}}_{r_1}^{down}$, the diameter robot needs to move horizontally once, in constant epoch.
    If the robot $r_1$ sees any robot \(r'\) above $\tilde{L}_{r_1}$, it moves horizontally to a point such that, after the movement the robot \(r'\) is on the line $\tilde{L}_{r_1}$. Hence in each epoch, at least one robot is included in the region $\widetilde{\mathcal{H}}_{r_1}^{down}$. Hence we can say that \(r_1\) will reach a point where all the other robots are below $\tilde{L}_{r_1}$ or on the line itself in at most $O(n)$ epochs, where \(n\) indicates the total number of robots.

    We now prove that the movements of the diameter robot $r_1$ not only places each robot within $\widetilde{\mathcal{H}}_{r_1}^{down}$, but it also ensure that they lie above the line joining the robots, say $r_1$ and $r_2$, where $r_2$ is another diameter robot. To show this, we argue that the bottom-most horizontal line can change at most once. Hence, once the diameter robots reach this line, the configuration either becomes triangular, or the diameter robots must move once more due to the downward movement of a non-diameter robot, after which the configuration becomes triangular. This establishes that once all robots are within $\widetilde{\mathcal{H}_{r_1}^{down}} ~\cap ~\widetilde{\mathcal{H}}_{r_2}^{down}$, the diameter robots require only a constant number of additional epochs to reach a triangular configuration.

    \noindent \textit{Claim 2: The bottom-most line can change at most once.}\\
    \textit{Proof:} Notice that if a robot sees more than two robots below itself, it does not move unless all those robots lie on a single semicircle. Even if in that case, the robot projects but does not cross the lowest robot of the configuration.
    Thus, two cases remain: the bottom-most horizontal line of the current configuration contains either one or two robots.
    For a robot $r'$ to cross the lowest robot, it must see two robots $r_1$ and $r_2$ below it with the same $y$-coordinate, causing $r'$ to assume they are diameter robots. It then incorrectly initiates the phase \textsc{Triangular-To-Semicircular} and moves to the semicircle $\lsc(r_1, r_2)$, which creates a new bottom-most line (see Fig. \ref{fig:misinterpretation}). We argue that such a misinterpretation can occur at most once.

    \noindent\textit{Case 1:} Suppose two robots \(r_1\) and \(r_2\) lie on the lowest line of the configuration. Any robot outside the region between the lines $L_{r_1}^{\perp}$ and $L_{r_2}^{\perp}$ cannot cross \(\overleftrightarrow{r_1r_2}\), since it must see at least three robots: two obstructing \(r_1\) and \(r_2\), and one misidentified as a diameter robot by itself. 
    If a robot \(r'\) is inside the region between the lines $L_{r_1}^{\perp}$ and $L_{r_2}^{\perp}$, it may cross \(\overleftrightarrow{r_1r_2}\). However, once \(r'\) crosses, any remaining robot (inside or outside this region and above \(\overleftrightarrow{r_1r_2}\)) always sees at least three robots and therefore never crosses it. 
    Finally, the diameter robots \(r_1\) and \(r_2\) move to the line \(L_{r'}\), implying that the bottom-most line can change at most once.


\begin{figure}[ht]
    \centering
    \begin{subfigure}[t]{0.46\linewidth}
        \centering
        \includegraphics[width=\linewidth]{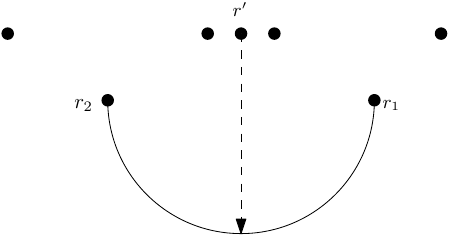}
        \caption{\(r\) projects itself on the semicircle since it can see two robots below it which share the same $y$-coordinate.}
        \label{fig:mis1}
    \end{subfigure}
    \hfill
    \begin{subfigure}[t]{0.46\linewidth}
        \centering
        \includegraphics[width=\linewidth]{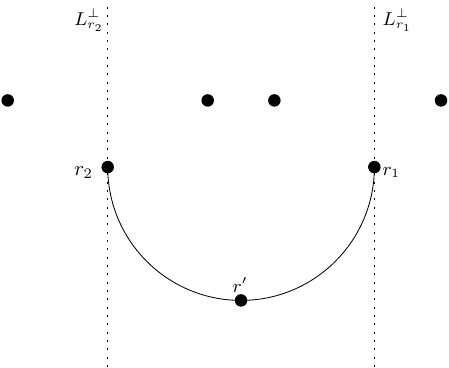}
        \caption{Any robot above \(\overline{r_1r_2}\) can see at least three robots below it, and so can not cross the bottom-most line.}
        \label{fig:mis1result}
    \end{subfigure}
    \caption{Illustration of robot visibility and projection conditions.}
    \label{fig:misinterpretation}
\end{figure}

    \noindent\textit{Case 2:} There is only one robot $r'$ on the bottom most line of the configuration. If another robot \(r\) crosses \(L_{r'}\) by misinterpreting two robots \(r_1\) and \(r_2\) as diameter robots, then after this move any robot above \(\overleftrightarrow{r_1r_2}\) sees at least three robots below and thus never crosses the new bottom line \(L_r\). 
    Any robot \(r''\) below \(\overleftrightarrow{r_1r_2}\) must lie on either \(\overleftrightarrow{r^{prev}r_1}\) or \(\overleftrightarrow{r^{prev}r_2}\), where \(r^{prev}\) is the previous position of \(r\) (see Fig. \ref{fig:misinterpretation2}). 
    Such an \(r''\) sees only one robot on \(L_r\) and hence cannot cross it. 
    Therefore, the bottom-most line can change at most once due to the movement of \(r\).

    Note that even if any one or both the diameter robots $r_1$ and $r_2 $are inactive, the phase of that robot does not change. It will always follow \textsc{Triangular-To-Semicircular}. Therefore, the semi-synchronous setting will not have any effect on the proofs given above in Claim-1 and Claim-2

    Hence, from Claim-1 and Claim-2, we can conclude that, in $O(n)$ epochs, all the robots lie above the line $\overleftrightarrow{r_1r_2}$ and inside the region $\widetilde{\mathcal{H}}_{r_1}^{down}$ ~$\cap$ ~ $\widetilde{\mathcal{H}}_{r_2}^{down}$, which essentially transforms the initial configuration to the triangular one. \qed



 \end{proof}

\begin{figure}[ht]
    \centering
    \begin{subfigure}[t]{0.45\linewidth}
        \centering
        \includegraphics[width=\linewidth]{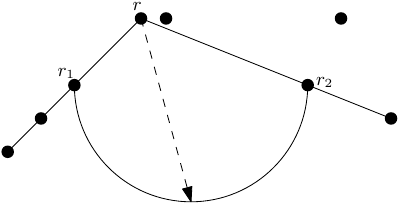}
        \caption{Robot \(r\) sees only two robots below itself. Hence, the robot \(r\) projects itself on \(\lsc(r_1, r_2)\).}
        \label{fig:mis2}
    \end{subfigure}%
    \hfill
    \hfill
    \begin{subfigure}[t]{0.45\linewidth}
        \centering
        \includegraphics[width=\linewidth]{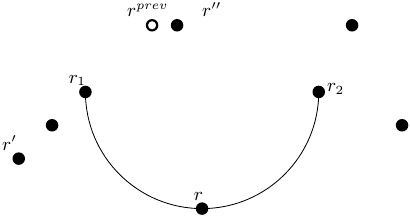}
        \caption{
        $r''$ can not cross the bottom most line $L_r$, as it is unable to see two robots on $L_r$ with the same $y$-coordinate.
        }
        \label{fig:mis2result}
    \end{subfigure}
    \caption{Robot visibility conditions for projection on \(\lsc(r_1, r_2)\).}
    \label{fig:misinterpretation2}
\end{figure}

\atmostonemisinterpret*

\begin{proof} 

Once the triangular configuration is reached, the bottom-most line of the configuration will contain exactly two robots, which are diameter robots, say \(r_1\) and \(r_2\).

Suppose there is a robot $r$ which can see two robots \(r'_1\) and \(r_2'\) satisfying all the conditions (mentioned in stage\textsc{Triangular-To-Semicircular}) to project itself on the $\lsc(r_1',r_2')$.
We know that any robot, seeing more than two robots below itself, cannot cross the bottom-most line of the configuration which is in this case, $\overleftrightarrow{r_1r_2}$.
Now suppose that \(r\) can see only two robots below itself \(r_1'\) and \(r_2'\). 
In such a case, there should not exist any robot between lines \(L_r\) and $\overleftrightarrow{r_1'r_2'}$.
Moreover, all the robots below $\overleftrightarrow{r_1'r_2'}$ are either on the line \(\overleftrightarrow{rr_2'}\) or on the line \(\overleftrightarrow{rr_1'}\). 
Hence, without loss of generality, we can say that the robots \(r_1\) and \(r_2\) are also on the lines \(\overleftrightarrow{rr_1'}\) and \(\overleftrightarrow{rr_2'}\), respectively. 
We will show that there does not exist a robot \(r'\) other than $r$ that can see only two robots except \(r_1\) and \(r_2\), which proves the claim.
The robot \(r'\) can not be above the line \(L_r\), since any robot above \(L_r\) can see at least three robots below itself. 
All the robots below \(L_r\) are located either on the line \(\overleftrightarrow{rr_2'}\) or on the line \(\overleftrightarrow{rr_1'}\). 
If the robot is located on the line \(\overleftrightarrow{rr_2'}\) then it can see robot \(r_1\) and if it is located on the line \(\overleftrightarrow{rr_1'}\), it can see the robot \(r_2\).
There do not exist two robots that can be misinterpreted as diameter robots by $r'$.
The robot \(r'\) also can not be on the line \(L_r\) since it can always see \(r_1'\), \(r_2'\) and at least one of the robot among \(r_1\) and \(r_2\). \qed
\end{proof}

\traingulartosemicircle*

\begin{proof}
 We divide the proof into two cases based on whether any robot misinterprets the diameter robots $r_1$ and $r_2$ or not.
 
\noindent \textit{Case 1: No robot misinterpret the diameter robots and project itself below the bottom most line of the current configuration.}\\
If the robots are not in the semicircular configuration, at any LCM cycle, there always exist at least one robot $r'$, lying on just above the line $\overleftrightarrow{r_1r_2}$, which can see both the diameter robots \(r_1\) and \(r_2\).
The robot $r'$ project itself on the $\lsc(r_1, r_2)$.
Once we reach the configuration where there is at least one robot on the semicircle (including $r'$) and all the other robots that are not on the semicircle are above the line $\overleftrightarrow{r_1r_2}$ and inside or on the isosceles right angled triangle with hypotenuse is $\overline{r_1r_2}$, all the robots above the line $\overleftrightarrow{r_1r_2}$ can always see at least three robots below itself.
Hence, they do not cross the line $\overleftrightarrow{r_1r_2}$, even if they misinterpret the diameter robots.
They only cross the line when they find the original diameter robots.
Since, in every epoch, there exists at least one robot that can see both the diameter robots, at least one of them must move to the semicircle in that epoch.
Thus, all the robots will reach the semicircular configuration in \(O(n)\) epochs.

\noindent \textit{Case 2: There is a misinterpretation and a robot \(r\) projects itself below the line $\overleftrightarrow{r_1r_2}$ where \(r_1\) and \(r_2\) are the diameter robots.}\\
As shown in Lemma \ref{lemma:L2}, at any given LCM cycle, only one robot can misinterpret the diameter robots and project itself below the bottom-most line of the configuration. Let us assume \(r\) is the robot which misinterprets the robots \(r_1'\) and \(r_2'\) as diameter robots. All the robots below the line \(L_r\) are located either on the line \(\overleftrightarrow{rr_1'}\) or on the line \(\overleftrightarrow{rr_2'}\).
Hence, at any given point of time, at most two robots can cross the line \(\overleftrightarrow{r_1r_2}\) along with $r$. 
Therefore, there can be two sub-cases depending on the number of robots that are simultaneously crossing the line $\overleftrightarrow{r_1r_2}$ along with $r$. 
\newline

\begin{itemize}

\item \noindent\textit{Case 2.1: Two robots cross the line \(\overleftrightarrow{r_1r_2}\) in sync. $r$}: 
If three robots cross the line \(\overleftrightarrow{r_1r_2}\), then two of them will share the same $y$-coordinate.
In the next epoch, the diameter robots $r_1$ and $r_2$ reach to the bottom-most line of the current configuration.
After that, the robot(s) on the bottom-most line, except the diameter robots, move upward \(v\) vertical distance away from themselves, where $v$ is the smallest vertical distance from any other robot.
Note that, in this cycle, no other robot crosses the bottom-most line, as that line contains three robots.
Therefore, all three robots that had crossed the line \(\overleftrightarrow{r_1r_2}\) before are now on the same horizontal line. Thereafter, no robot will misinterpret the diameter robots and project itself below the bottom-most line.
Hence, by a similar argument as discussed in Case 1, the robots will reach the semicircular configuration in \(O(n)\) epochs.\newline

\item \noindent\textit{Case 2.2: Only two robots \(r\) and \(r'\) cross the line \(\overleftrightarrow{r_1r_2}\) simultaneously: }
\(r'\) projects itself on the $\lsc(r_1, r_2)$.
If only two robots cross the line, all the robots above the line \(\overleftrightarrow{r_1'r_2'}\) do not cross the bottom-most line, as they can see at least three robots below them.
The robots below the line \(\overleftrightarrow{r_1'r_2'}\) are either on the line \(\overleftrightarrow{rr_1'}\) or on the line \(\overleftrightarrow{rr_2'}\).
After the movements of $r$ and $r'$, every robot below the line $\overleftrightarrow{r_1'r_2'}$ can see either both original diameter robots or at least one of them. Hence, such robots cannot misinterpret the diameter robots and therefore never cross the line $\overleftrightarrow{r_1r_2}$, while $r$ and $r'$ are still below $\overleftrightarrow{r_1r_2}$.
Hence, in the next epoch, the diameter robots move to the bottom-most line.
After that, $r'$ moves to the line $L_r$ in another epoch.
Again note that, before the movement of $r'$, no robot will cross the bottom-most line, as the line contains three robots.

After the movement of $r'$, any robot above the line \(\overleftrightarrow{r'r}\) finds three robots below itself.
Hence, it will not cross the bottom-most line of the configuration. But the robots \(r\) and \(r'\) can see both the diameter robot \(r_1\) and \(r_2\). Therefore, at least one of them must move on the semicircle $\lsc(r_1,r_2)$. 
Thereafter, all other robots will reach to the semicircular configuration in $O(n)$ epochs, as explained in Case 1. 
\end{itemize}
\qed
\end{proof}

\gridpoint*

\begin{proof}
  Once the robots reach the semicircular configuration by following \textsc{Triangular-To-SemiCircle} (Section \ref{subsec:TriangularToSemicircle}), they all agree on $n$, the total number of robots. 
  By the model assumption, they also agree on $k$, the total number of colors. 
  Therefore, all robots agree on the grid points, which are functions of $n$ and $k$. 

    Now, suppose there exists a robot $r$ that is not on its assigned grid point. 
    In such a case, all robots, including $r$ itself, can identify that $r$ is misplaced and compute their destination grid point, as mentioned in Section \ref{subsec:sectorToSeparartion}. 
    At any given epoch, there always exists at least one robot $r'$ such that no other robot lies on the closed arc between $r'$ and its destination $t_{r'}$. 
    If there is a robot $r''$ on that arc, then $r''$ first moves to its destination, followed by $r'$. 
    Hence, in each epoch, at least one robot settles on its assigned grid point. 
    This implies that all robots settle on some grid point in at most $O(n)$ epochs. 
    
    Notice that all the robots are located on a single semicircle. Any robot $r'$ only moves when it sees that the destination point $t_{r'}$ is unoccupied and there is no robot located on the arc of the semicircle segmented by the points $t_{r'}$ and $r'$. Notice that if there is a robot $r''$ which shares the same path as robot $r'$ then the robot $r'$ must be on the arc segmented by the robot $r''$ and its destination point $t_{r''}$. Thus, the robot $r''$ will not move. Hence in this stage, all robots settle on some grid point in at most $O(n)$ epochs without any collision.  \qed

\end{proof}

\gridtosector* 
 
\begin{proof}
    By Lemma \ref{lemma:semiCirc}, once the robots reach the semicircular configuration in $O(n)$ epochs, they agree on the numbers $n$, the total number of robots and $k$, the total number of colors.
    Thereafter, all the robots can uniquely calculate the $\tau=\frac{len}{(nk)^2}$, which is the signalling step size, where $len = \frac{rad(\lsc) \cdot \pi}{2nk}$.
    Additionally, the robots have a consensus about the leaders, as either it lies on the lowest point or its neighbouring points of the semicircle.
    Any arc segmented by the lowest grid-point and the neighbouring grid-point will contain $(nk)^2$ signalling steps. Notice that the leader robot can signal an ordered pair $(s_1,s_2)$ by reaching to a point $g$ steps away from lowest grid-point. $g$ is calculated by
    \[
        g=s_1.nk+s_2 
    \]
    which is unique for every ordered pair $(s_1,s_2)$ $0<s_1,s_2<nk$ where $nk$ is the number of the grid-points of one half of the semicircle.
    The maximum value of $g$ is $(nk-1)nk + (nk-1) < (nk)^2$, so even if the maximum number of robots are present in a half, the leader has sufficient signalling points to signal all those robots.
    All the robots can uniquely and correctly decode the signal into the ordered pair as 
    \[
        s_1= \left\lfloor\frac{g}{nk}\right\rfloor
    \]
    and 
    \[
        s_2= g~ mod (nk)
    \]
    Hence, we can say that the leader robot can signal a robot $r$, which is identifiable by the leader by its color, to go to any grid-point in the half of the semicircle. 
    The leader robot can also signal any robot to go to the diameter point on the opposite half. 
    Therefore, in at most two consecutive epochs, at least one robot settles to its designated sector, which implies that this procedure concludes in $O(n)$ epochs. 
    
    Notice that no robot moves to the signalling arc.
    Moreover, in each cycle, at most one robot from each half moves.
    We know that the chord joining any two points on one half of the semicircle will entirely lie in that half. So, the two chords from the two halves of the semicircle cannot intersect.
    Hence, if the two leader are signalling the two robots of their respective halves to move to their own halves, the line of movement of the two signalled robots can not intersect, and so they never collide.
    On the other hand, a leader signals a robot to move to the other half, only when it signals the robot to move to the diameter point, where the diameter point on its own half is occupied.
    For the leader case, both leaders can not signal such robots simultaneously. 
    Since any chord joining the diameter point of the semicircle to any point on the opposite half cannot intersect with any chord joining the two points in the same half except at the diameter point, the signalled robots can not collide in that case too.
    
    After the diameter points are fixed by the highest order colored robot and all the other robots except the leader robots reach the sector assigned, in the next three epochs, the nearest robot sharing the same color as the leader robot, signals the leader robot so it moves to the sector assigned to its color. Following the same argument as earlier, we can say that the leader robot reaches the destination point without collision.
    Hence, the robots achieve the partition into the sectors in $O(n)$ epochs.
\end{proof}

\separation*

\begin{proof}
    From Lemma \ref{lem:gridPointConfig.}, the robots reach the configuration where all the robots are located on the grid points belonging to the sector assigned to the color of the robot.
    If the robot $r$ finds both the conditions mentioned in the Section \ref{subsec:sectorToSeparartion}, the robot calculates the next destination point. The robot $r$ knows the total number of colors $k$ and the number assigned to the sector. The robot can also calculate the $rad(\lsc)$ by diameter robots. Hence, the robots can calculate the destination point correctly.
    Notice that the robots in the uppermost sector go to the outermost semicircle, and the robots in the last sector of the semicircle stay on the innermost semicircle. In this way, during this stage, all the robots can see the two diameter robots. Thus, they can calculate the radius of the innermost semicircle at any cycle.
    At any given cycle, only one robot moves from each half of the semicircle to the outer semicircle of its own half. Since all the robots have distinct destination points, they can not collide at any point.
    Even if the robot wrongly calculates its radius and goes to the wrong semicircle, the robot immediately below it signals it to come back to the innermost semicircle. The robot can see the two diameter robots, and so it can come back to the innermost semicircle and start again.
    At any given epoch, at least one robot settles on the semicircle assigned to its color. Hence, we conclude that the robots will reach the separated configuration using \textsc{SectorPartitioning-To-ConcenSemiCirc} in at most $O(n)$ epochs without any collision.

\end{proof}

From the Lemmas \ref{lemma:L1} \ref{lemma:semiCirc}, \ref{lem:gridPointConfig.}, \ref{lem:sectorPart}, and \ref{lem:finalSeparation}, we conclude the following theorem.

\begin{restatable}{thm}{final}
    \label{thm:final}
    From an arbitrary initial configuration, a collection of $n$ unconscious opaque robots reaches a separated configuration of concentric semicircles by following \textsc{Con-SemCirc-Separation} in $O(n)$ epochs without collision.
\end{restatable}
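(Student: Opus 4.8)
The plan is to prove Theorem~\ref{thm:final} by \emph{sequential composition} of the five stage lemmas, treating the entire algorithm as a pipeline whose phases execute one after another. Each lemma already certifies both correctness and an $O(n)$-epoch bound \emph{within} its own stage, so what remains for the theorem is to (i) verify that the terminal configuration guaranteed by stage $i$ coincides with the input configuration hypothesized by stage $i{+}1$, (ii) argue that the hand-off between consecutive stages is recognizable to every robot despite obstructed visibility and the $\mathcal{SSYNC}$ scheduler, and (iii) aggregate the per-stage time and collision-freeness guarantees into a single global statement.

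First I would set up the pipeline explicitly. By Lemma~\ref{lemma:L1}, the arbitrary initial configuration becomes triangular in $O(n)$ epochs; by Lemmas~\ref{lemma:L2} and~\ref{lemma:semiCirc}, the triangular configuration collapses onto a single semicircle with diameter $\overline{r_1 r_2}$ in $O(n)$ epochs, with the at-most-one bounded misinterpretation already absorbed into the analysis. By Lemma~\ref{lem:gridPointConfig.} the semicircular configuration resolves into a grid-point configuration in $O(n)$ epochs, by Lemma~\ref{lem:sectorPart} the grid points are partitioned into color sectors in $O(n)$ epochs, and finally by Lemma~\ref{lem:finalSeparation} the sectored configuration separates into the concentric semicircles of Definition~\ref{defn:Separated-Config} in $O(n)$ epochs. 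Since the output predicate certified by each lemma is precisely the input hypothesis of the next, the five phases chain together, and summing five $O(n)$ terms gives the claimed overall bound of $O(n)$.

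The step I expect to be the main obstacle is the \emph{stage-detection and hand-off} argument, since the robots are oblivious and select their active stage purely from a possibly obstructed local view, so a robot may locally believe it is in a stage different from the global one. I would handle this in two regimes. In the obstructed regime (stages \textsc{Arbitrary-To-Triangular} and \textsc{Triangular-To-Semicircular}), I rely on the fact---proved inside Lemmas~\ref{lemma:L2} and~\ref{lemma:semiCirc}---that such overlaps are benign: any crossing of the bottom-most line occurs at most once and never derails convergence, and these lemmas are already established under $\mathcal{SSYNC}$. The pivotal structural fact to emphasize is that once all robots lie on a single semicircle, obstructed visibility \emph{disappears}: every robot then sees every other, so from the semicircular configuration onward all robots share a consistent global view and can unambiguously compute $n$, the grid points, the leader(s), and the sector assignment. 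This is what lets me conclude that stages three through five are detected and executed without ambiguity, making each remaining hand-off deterministic.

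Finally I would assemble collision-freeness globally. Within the motion-bearing stages the lemmas already guarantee no collisions: the $h/3$ offsets used in Case~B and Case~C.2 prevent two synchronously moving robots from colliding while visibility is still obstructed, and Lemmas~\ref{lem:gridPointConfig.}, \ref{lem:sectorPart}, and~\ref{lem:finalSeparation} establish collision-freeness on the semicircle through the non-crossing-chord property and the one-mover-per-half discipline. Since no stage boundary forces simultaneous moves by robots with coincident destinations, the union of the per-stage guarantees yields a collision-free execution throughout. Combining the valid composition, the additive $O(n)$ bound, and the aggregated collision-freeness completes the proof of Theorem~\ref{thm:final}.
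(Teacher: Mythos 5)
Your proposal is correct and follows essentially the same route as the paper: the paper's own proof of Theorem~\ref{thm:final} is simply the sequential composition of Lemmas~\ref{lemma:L1}, \ref{lemma:semiCirc}, \ref{lem:gridPointConfig.}, \ref{lem:sectorPart}, and~\ref{lem:finalSeparation}, with the per-stage $O(n)$ bounds and collision-freeness guarantees aggregated exactly as you describe. Your additional discussion of stage hand-off, the at-most-once misinterpretation (Lemma~\ref{lemma:L2}), and the restoration of full visibility on the semicircle merely makes explicit what the paper leaves implicit inside the stage lemmas.
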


\section{Conclusion}
In this paper, we study the unconscious colored separation problem under obstructed visibility. The introduction of opacity in the robot model makes this work significant in comparison to the existing literature. We design an algorithm that achieves separation into semicircles within $O(n)$ epochs, without collisions, under the semi-synchronous scheduler.
It follows directly from our final separation structure that, if the initial configuration contains at least three robots of each color (rather than two), our algorithm also provides an immediate solution to the separation problem into concentric circles.
 To conclude, exploring whether other classical distributed problems can be solved using unconscious colored robots presents an interesting direction for future research.

\bibliographystyle{splncs04}
\bibliography{bib.bib}

\end{document}